\DeclareFontFamily{OMX}{MnSymbolE}{}
\DeclareFontShape{OMX}{MnSymbolE}{m}{n}{
    <-6>  MnSymbolE5
   <6-7>  MnSymbolE6
   <7-8>  MnSymbolE7
   <8-9>  MnSymbolE8
   <9-10> MnSymbolE9
  <10-12> MnSymbolE10
  <12->   MnSymbolE12}{}
\DeclareSymbolFont{mnlargesymbols}{OMX}{MnSymbolE}{m}{n}
\DeclareMathDelimiter{\llangle}{\mathopen}{mnlargesymbols}{'164}{mnlargesymbols}{'164}
\DeclareMathDelimiter{\rrangle}{\mathclose}{mnlargesymbols}{'171}{mnlargesymbols}{'171}
\theoremstyle{plain}
\newtheorem{lemm}{Lemma}
\newtheorem{theo}{Theorem}
\theoremstyle{definition}
\newtheorem{prot}{Protocol}
\newcommand{\cB}{\mathcal{B}}
\newcommand*{\QEDA}{\hfill\ensuremath{\blacksquare}}%
\newcommand{\ii}{\iota}
\newcommand{\pureset}[1]{\mathrm{P}(#1)}
\newcommand{\Rr}{\sR}
\newcommand{\Ss}{\sS}
\newcommand\sff{\mathsf{f}}
\newcommand\sF{\mathsf{F}}
\newcommand\sA{\mathsf{A}}
\newcommand\sB{\mathsf{B}}
\newcommand\sC{\mathsf{C}}
\newcommand\sI{\mathsf{I}}
\newcommand\sM{\mathsf{M}}
\newcommand\sR{\mathsf{R}}
\newcommand\sS{\mathsf{S}}
\newcommand\sU{\mathsf{U}}
\newcommand\sV{\mathsf{V}}
\newcommand\sX{\mathsf{X}}
\newcommand\sY{\mathsf{Y}}
\newcommand\sZ{\mathsf{Z}}
\newcommand\sT{\mathsf{T}}
\tikzstyle{block} = [rectangle, draw, 
\tikzstyle{rblock} = [rectangle, draw, 
\tikzstyle{circ} = [circle, draw, inner sep=0em, minimum width=2em,
\tikzstyle{circ2} = [circle, draw, inner sep=0em, minimum width=2.2em,
\tikzstyle{rec} = [rectangle, draw]
\tikzstyle{line} = [draw, -latex]
\tikzstyle{imp-line} = [draw, -implies, double equal sign distance]
\tikzset{snake arrow/.style=
{
decorate,
decoration={snake,amplitude=.4mm,segment length=2mm,pre length=1mm, post length=1mm}}
}
\tikzset{snake arrow0/.style=
{
decorate,
decoration={snake,amplitude=.4mm,segment length=2mm,post length=1mm}}
}
\def\Label#1{\label{#1}\ [\ \text{#1}\ ]\ }
\def\Label{\label}
\begin{document}
\title{Two-Server Oblivious Transfer for Quantum Messages}
\author{Masahito Hayashi}
\email{hayashi@sustech.edu.cn}
\affiliation{Shenzhen Institute for Quantum Science and Engineering, Southern University of Science and Technology, Shenzhen,518055, China}
\affiliation{International Quantum Academy (SIQA), Futian District, Shenzhen 518048, China}
\affiliation{Graduate School of Mathematics, Nagoya University, Nagoya, 464-8602, Japan}
\author{Seunghoan~Song}
\email{seunghoans@gmail.com}
\affiliation{Graduate School of Mathematics, Nagoya University, Nagoya, 464-8602, Japan}

%\thanks{This article was presented in part at Proceedings of 2021 IEEE International Symposium on Information Theory \cite{SH21}. }% <-this % stops a 

\begin{abstract}
Oblivious transfer is considered as a cryptographic primitive task for quantum information processing over quantum network.
Although it is possible with two servers, any existing protocol works only with classical messages.
We propose two-server oblivious transfer protocols for quantum messages.
\end{abstract}

\maketitle

\if0
\begin{IEEEkeywords}
symmetric information retrieval, 
visible setting,
two servers,
oblivious transfer,
quantum state message
% %Communications Society, IEEE, IEEEtran, journal, \LaTeX, paper, template.
\end{IEEEkeywords}
\fi
\section{Introduction}

Construction of quantum internet is the ultimate goal of quantum technology \cite{Kimble}.
Quantum state communication and entanglement sharing over long distances are basic functions of the quantum internet, but to fully extract its potential, it is essential to find out its applications.
One promising direction is distributed quantum protocols and algorithms with security.
For example, quantum network coding has been studied from theory to experiment \cite{Kimble,Hayashi2007,PhysRevA.76.040301,Kobayashi2009,Leung2010,JFM11,SH18-2,PhysRevLett.101.060401,PhysRevA.80.022339, HS20,BH20,LIY19,PCXLY21,NBA17,PMS20,PXP21,PCX21,WE21} and blind quantum computation has also been extensively studied \cite{Childs,BFK,BKBF,MF,Morimae,MDF,MF2,LCWW,SZ,HM}.

As another distributed secure quantum protocol, this paper studies two-server quantum oblivious transfer (TQOT) for the transmission of quantum states.
Oblivious transfer
	is the task that the user downloads the intended message
	among several messages from the servers under two requirements.
As the first condition, the user's choice of the intended message is not leaked to the servers, which is called the user secrecy,
when the user is honest and the servers make arbitrary operations.
As the second condition,
the information of other messages is not leaked to the user, which is called
the server secrecy,
when the servers are honest and the user makes arbitrary operations.
It is known that one-server oblivious transfer is impossible even with the quantum system if we have no assumption \cite{Lo}.
However, if there are two servers that do not communicate with each other, oblivious transfer is possible when the messages are given as classical information.
That is, two-server oblivious transfer for classical messages (C-TOT) is available,
and is often called two-server symmetric private information retrieval (SPIR) for classical messages.

Although two-server oblivious transfer for classical messages is possible by using classical communication,
the use of quantum communication improves its communication speed.
In the following, this problem setting with quantum communication
is simplified to
classical two-server quantum oblivious transfer (C-TQOT).
Several studies were done on this problem when the message is classical information and
a noiseless quantum channel is available.
For example, Kerenidis and de Wolf \cite{KdW03, KdW04} studied this problem by relaxing the secrecy criterion.
When the number $\sff$ of messages is fixed,
the preceding study \cite{SH19} derived the optimal transmission rate for this problem, which is defined similarly to its classical counterpart
\cite{SJ17,SJ17-2} as the optimal communication efficiency for arbitrary-long classical messages.
It proved that a protocol
can be constructed
without any communication loss
when the message is classical information, a noiseless quantum channel is available, and
prior-entanglement among servers is allowed.
The papers \cite{SH19-2,SH20}
and Allaix et al. \cite{AHPH20, ASHPHH21}
also considered this problem
with colluding servers in which
secrecy of the protocol is preserved even if some servers may communicate and collude.
Kon and Lim \cite{KL20} constructed a two-server oblivious transfer protocol with quantum-key distribution and Wang et al. \cite{WKNL21-1, WKNL21-2} implemented two-server oblivious transfer protocols experimentally.

Besides the above studies, in the quantum network, it is often required to transmit quantum messages, i.e., quantum states as a subprotocol in various quantum computation tasks\cite{Wie83,GC01,Moc07,CK09,ACG+16}.
% quantum money [Wie83], quantum digital signatures [GC01], quantum coin-flipping [Moc07, CK09, ACG+16] a
However, no preceding paper studied two-server oblivious transfer for quantum messages.
In the following, this problem setting with quantum communication
is simplified to two-server quantum oblivious transfer.
Therefore, it is much demanded to develop protocols achieving this task.
In fact, the trivial method to download all states from the servers
satisfies the user secrecy condition, but does not satisfy the server secrecy condition.
Our requirement is to realize both secrecy conditions simultaneously.
This paper proposes such desired protocols.
In our proposed protocols, two servers have classical descriptions of $\sff$ quantum messages $\rho_1, \ldots, \rho_\sff$.
To implement the desired protocol, we assume that the two servers share
several entangled states.
The protocol is outlined as follows while several variants exist.
The user intends to get only one quantum states $\rho_K$ and
sends query $Q_1$ and $Q_2$ to Servers 1 and 2, respectively
while its label $K$ is not leaked to both servers.
That is, while the combination of $Q_1$ and $Q_2$ identifies
the label $K$, one query $Q_1$ nor $Q_2$ does not determine $K$.
Later, both servers send the user
their entanglement half after a certain quantum operation determined by queries $Q_1$ and $Q_2$.
Finally, the user makes a decoding operation to recover the state $\rho_K$.
When the protocol is well designed,
if the user recovers the state $\rho_K$ and the servers follow the original protocol,
the user cannot obtain any information for other quantum states $\rho_{K'}$ with
$K' \neq K$.

%\subsection{Organization of this paper}
The remainder of the paper is organized as follows.
Section \ref{S2} gives the definitions of several concepts.
Section \ref{S3} presents the outline of our protocol constructions.
Section~\ref{sec:prelim} is the technical preliminaries of the paper.
Sections \ref{SPR1} - \ref{sec:mix} are devoted for constructions of our protocols.
Section~\ref{sec:conclusion} is the conclusion of the paper.

\section{Definitions of various concepts}\Label{S2}
To briefly explain our results, we prepare 
the definitions of various concepts.
\subsection{Correctness and complexity}
%correctness, secrecy, and complexity}
To discuss the properties of our TQOT  protocols, 
we prepare several concepts.
First, we define the set $\mathcal{S}$ of possible quantum states 
as a subset of the set ${\cal S}({\cal H}_d)$ of states on ${\cal H}_d:=\mathbb{C}^d $.
A TQOT  protocol is called a TQOT  protocol over the set $\mathcal{S}$ 
when it works when the set $\mathcal{S}$ is the set of possible quantum states. 
We denote the number of messages by $\sff $. 
A TQOT  protocol $\Phi$ has two types of inputs.
The first input is $\sff$ states $(\rho_1, \ldots, \rho_\sff) \in 
\mathcal{S}^\sff$.
The second input is the choice of the label of message intended by the user, which is written as the random variable $K$.
The output of the protocol is a state $\rho_{out}$ on ${\cal H}_d$.

A TQOT  protocol $\Phi$ has bilateral communication.
The communication from the user to the servers is the upload communication,
and  
the communication from the servers to the users is the download communication.
The communication complexity is composed of 
the upload complexity and the download complexity.
The upload complexity is the sum of the communication sizes of all upload communications, and
the download complexity is the sum of the communication sizes of all 
download communications.
The sum of the upload and download complexity is called
the communication complexity.
We adopt the communication complexity
as the optimality criterion under various security conditions.

A TQOT  protocol $\Phi$ is called a deterministic protocol 
when the following two conditions hold. 
The upload complexity and the download complexity are determined only by the protocol $\Phi$.
When the user and the servers are honest,
the output is determined only by $(\rho_1, \ldots, \rho_\sff)$ and $K$.
Otherwise, it is called a probabilistic protocol.
When $\Phi$ is a deterministic protocol, we denote the output state 
by $\Phi_{out}(\rho_1, \ldots, \rho_\sff,K)= \rho_{out}$.
The upload complexity, the download complexity, and the communication complexity are 
denoted by $UC(\Phi)$, $DC(\Phi)$, and $CC(\Phi)$, respectively.
Hence, the communication complexity $CC(\Phi)$ is calculated as
$UC(\Phi)+DC(\Phi)$.

Next, we consider the case when $\Phi$ is a probabilistic protocol.
Even when the user and the servers are honest,
the user has a random variable $X$ that determines 
the upload complexity, the download complexity, and
the output state $\rho_{out}$.

We denote the distribution of $X$ by $P_{\Phi}$, and 
denote 
the upload complexity, the download complexity, 
the communication complexity, and
the output state by 
$UC_X(\Phi)$, $DC_X(\Phi)$, $CC_X(\Phi)$, and
$\Phi_{out,X}(\rho_1, \ldots, \rho_\sff,K)= \rho_{out}$, respectively.

A deterministic protocol $\Phi$ is called correct when 
%the following condition holds. When the user and the servers are honest,
the relation $\Phi_{out}(\rho_1, \ldots, \rho_\sff,\ell)=\rho_\ell$ holds
for any elements $\ell \in [\sff]$ and 
$(\rho_1, \ldots, \rho_\sff) \in \mathcal{S}^\sff$.
A probabilistic protocol $\Phi$ is called $\alpha$-correct when 
%the following condition holds. When the user and the servers are honest,
the relation $\Phi_{out}(\rho_1, \ldots, \rho_\sff,\ell)=\rho_\ell$ holds
at least probability $\alpha$
for any elements $\ell\in [\sff]$ and 
$(\rho_1, \ldots, \rho_\sff) \in \mathcal{S}^\sff$.

\subsection{User and server secrecy}
A TQOT  protocol $\Phi$ has two types of secrecy.
One is the user secrecy and the other is server secrecy.
We say that a TQOT  protocol $\Phi$ satisfies the user secrecy 
when the following condition holds. 
When the servers apply attacks and the user is honest,
no server obtains the information of the user's request $K$, i.e., 
the condition 
\begin{align}
\rho_{Y_J}=\rho_{Y_J|\ell}\label{MLK}
\end{align}
holds for any $\ell \in [\sff]$,
where $\rho_{Y_J|K}$ is the final state on Server $J$ dependently of
the variable $K$.

In contrast, we say that a TQOT  protocol $\Phi$ satisfies the server secrecy 
when the following condition holds. 
When the servers are honest and the output state $\rho_{out}$ equals $\rho_K$,
the user obtains no information for other messages $\rho_\ell$ with
$\ell \neq K$.

\subsection{Blind and visible settings}
TQOT can be studied in two distinct settings, called the {\em blind} and {\em visible} settings, 
	in which quantum state compression has also been extensively studied \cite{423,288,290,269,35,201}.
In the blind setting \cite{423,288,290,201}, 
	the servers contain 
		quantum systems $X_1,\ldots,X_{\sff}$ with the message states $\rho_1,\ldots,\rho_{\sff}$, respectively,
	but does not know the states of the systems.
Due to the no-cloning theorem, the servers cannot generate more copies of the message states and the server's operations are independent of the message states.
Each server access these systems in the encoding process.
A TQOT  protocol of the blind setting is suitable for the case where 
	the servers generate the message states by some quantum algorithm and performs the TQOT  task.

On the other hand, in the visible setting \cite{269,35,201}, 
	the servers contain the descriptions of the message states
	$\rho_1, \ldots, \rho_{\sff}$, which can be considered as continuous variables.
With the descriptions of quantum states, the servers can generate multiple copies of the quantum states, without the limitation of the no-cloning theorem, 
	and apply quantum operations depending on the descriptions of the states.
Since any protocol in the blind setting can be considered as a protocol in the visible setting, 
	we can generally expect to achieve lower communication complexity in the visible setting.
% \myred{Even if the description of a quantum state is classical information, it is infinite-length classical information,
% 	and therefore we cannot send this description via QPIR for classical messages.}
Furthermore, the visible setting is a reasonable setting for the case where the user has no ability to generate quantum states
and requires the generation of the targeted state along with the TQOT  task.

\begin{figure}[t]
\begin{center}
        \resizebox {0.9\linewidth} {!} {
\begin{tikzpicture}[scale=0.5, node distance = 3.3cm, every text node part/.style={align=center}, auto]
% \Large
	\node [rblock] (serv) {Server 1\\$(\rho_1, \ldots , \rho_{\sff})$};
	\node [rblock, below=3em of serv] (serv2) {Server 2\\$(\rho_1, \ldots , \rho_{\sff})$};
	\node [rblock, below right=0em and 12em of serv] (user) {User};
%	\node [above=3em of serv] (1) {1. prepare state\\$\rho_1\otimes \cdots \otimes \rho_{\sff}$};
% 	\node [above=3em of serv] (1) {message states\\$\rho_1, \ldots , \rho_{\sff}$};
	\node [above=3em of user] (2) {target index $\ell\in[\sff]$};
	\node [below=3em of user] (5) {retrieved state $\rho_\ell$};
	
	\path [line] (user.150) --node[above=0.1em]{query $Q$} (serv.7);
	\path [line] (serv.-7) --node[below=0.3em]{answer $A$} (user.160);
	\path [line] (user.200) --node[above]{query $Q'$} (serv2.7);
	\path [line] (serv2.-7) --node[below]{answer $A'$} (user.210);
% 	\path [line] (1) -- (serv.north);
	\path [line] (2) -- (user.north);
	\path [line] (user.south) -| (5);

% 	\path [line] (serv.22 -| user.west) --node[above]{queries\\$Q$} (serv.22);
% 	\path [line] (serv.-22) --node[below]{answers\\$A$} (serv.-22 -| user.west);
% % 	\path [line] (1) -- (serv.north);
% 	\path [line] (2) -- (user.north);
% 	\path [line] (user.south) -| (5);
	
\end{tikzpicture}
}
\caption{TQOT protocol with quantum messages.}
\label{fig:one-server}
\end{center}
\end{figure}
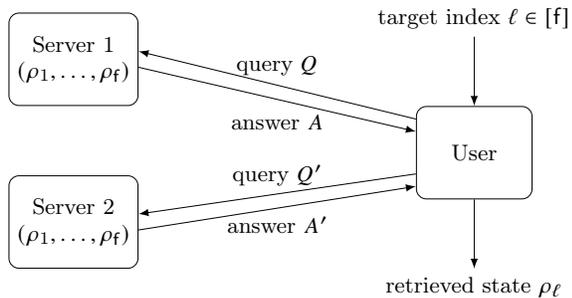

\section{Outline of obtained protocols}\Label{S3}
In this section, we propose various TQOT  protocols in the visible setting, which are summarized in Table~\ref{tab:vis}.
Our protocols prove that the TQOT  is possible in the visible setting.
In the two-server model, we assume that the servers do not communicate with each other.
	
\begin{table*}[ht]
\begin{center}
\caption{TQOT  Protocols in Visible Setting} \Label{tab:vis}
\begin{tabular}{|c|c|c|c|c|c|c|c|}
\hline
					&	\multirow{2}{*}{Message}				&	\multirow{2}{*}{Upload} 	&	\multirow{2}{*}{Download}	&	\multirow{2}{*}{Prior} & \multirow{2}{*}{User} & \multirow{2}{*}{Server}	& deterministic\\
					&	\multirow{2}{*}{States}				&	\multirow{2}{*}{Complexity}	&	\multirow{2}{*}{Complexity}	&\multirow{2}{*}{Entanglement}	& \multirow{2}{*}{Secrecy} & \multirow{2}{*}{Secrecy}	& or probabilistic\\
&&&&&&& (correctness)\\
\hline
\multirow{2}{*}{Protocol~\ref{PR1}}	&	real qubit 		&	\multirow{2}{*}{$2\sff$ bits}
	&	\multirow{2}{*}{$2$ qubits}			& \multirow{2}{*}{$1$ ebit} & \multirow{2}{*}{Yes} & \multirow{2}{*}{Yes} &	deterministic\\
&	pure states & & & & & & (correct)\\
\hline
\multirow{2}{*}{Protocol~\ref{PR2}}	&	real qudit 		&	\multirow{2}{*}{$2\sff$ bits}	
&	$2(d-1)$ 			& \multirow{2}{*}{$d-1$ ebits} & \multirow{2}{*}{Yes} & 
\multirow{2}{*}{Yes} &	deterministic\\
&	pure states & &qubits & & &	& (correct)\\
\hline
\multirow{2}{*}{Protocol~\ref{PR3}}	&	qudit commutative	&	\multirow{2}{*}{$2\sff$ bits	}
&	$2\log d$ 	& \multirow{2}{*}{$\log d$ ebit} & \multirow{2}{*}{Yes} & \multirow{2}{*}{Yes} &	deterministic\\
& unitary pure states & & qubits& & &	& (correct)\\
\hline
\multirow{3}{*}{Protocol~\ref{PR4}}&	qudit 	&	$2\sff$ bits		&
	$2(d-1) $ 	& & & &	\multirow{2}{*}{deterministic}\\
&	pure  &+ $4\log d$  & $+4\log d$&$d-1$ ebits & No &No & \multirow{2}{*}{(correct)}\\
&states&qubits& qubits &&&&\\
\hline
&	qudit&	\multirow{2}{*}{$2\sff+2d$ bits}	&	$ 2d(d-1)+$  
& $d(d-1)+$ & 
& &	\multirow{2}{*}{probabilistic}\\
Protocol~\ref{PR5} &	pure  &\multirow{2}{*}{in average} & $4 d \log d$  & $2 d\log d$ & No & Yes	& 
\multirow{2}{*}{(correct)}\\
&states	 & &  qubits in average & ebits in average &  &	 & 
\\
\hline
	&	qudit&		&	$2n(d-1)$	& $n(d-1)$  & 
& &	probabilistic\\
Protocol~\ref{PR6} & pure &$2\sff$ bits & $+4n\log d$ &$+2n\log d$ & Yes & Yes & 
\big($1-(\frac{d-1}{d})^{n}$\\
& states & & qubits &ebits&&&-correct\big)\\
\hline
	&	qudit&		&	$2n(d-1)$	& $n(d-1)$  & 
& &	probabilistic\\
Protocol~\ref{PR7}& mixed & $2\sff$ bits& $+4n\log d$ &$+(2n+1)\log d$ & Yes &	Yes & 
\big($1-(\frac{d-1}{d})^{n}$\\
&states & & qubits &ebits&&&-correct\big)\\
\hline
\end{tabular}
\end{center}
Protocols~\ref{PR4} and \ref{PR5} can be converted for mixed message states by increasing the prior entanglement by $\log d$-ebit.
\end{table*}

The goal is to construct Protocol~\ref{PR7}, i.e., a protocol that works with general mixed states in a qudit system.
This protocol is constructed by a combination of various subprotocols that work with 
the respective submodel.
Fig. \ref{F1} shows which protocol is used as a subprotocol
in each protocol construction.
In fact, when we restrict our state model into a submodel, 
we can realize much smaller download complexity.
First, we propose Protocol~\ref{PR1} that works only with real pure qubit states.
Then, using Protocol~\ref{PR1}, 
we propose Protocol~\ref{PR2} that works only with real pure qudit states.
Next, we propose Protocol \ref{PR3}, i.e., a protocol that works when
the states to be transmitted are limited to a state given by a commutative group.
These three protocols realize much smaller download complexity,
and realize the user secrecy and the server secrecy.

\begin{figure}[tb]{}
\begin{center} 
\includegraphics[width=\linewidth]{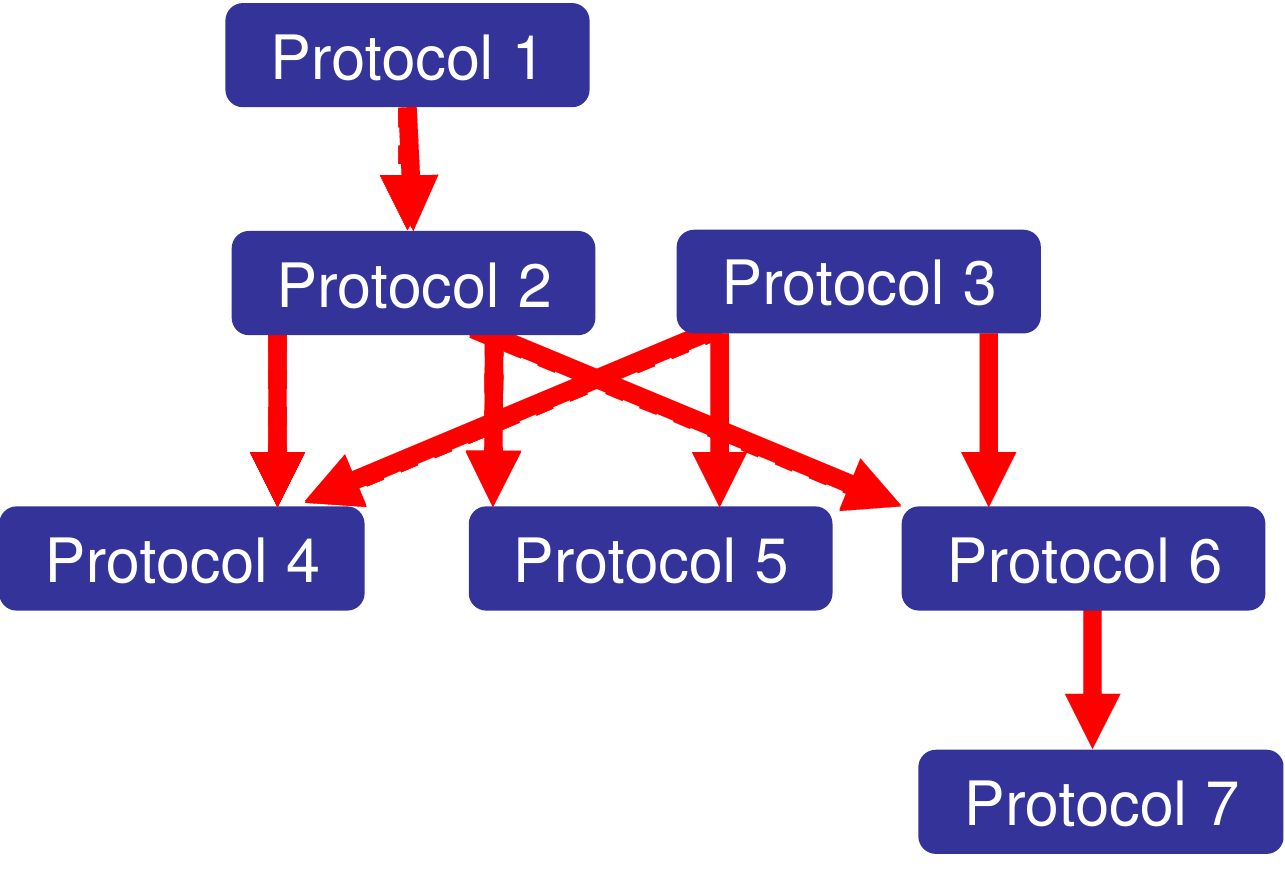}
\end{center}
\caption{{\bf Flow of protocol construction.}
Proposed protocols are constructed by using other protocol as subprotocols.
The arrow, Protocol 1 $\to$ Protocol 2, means that 
 Protocol 2 is constructed by using Protocol 1 as a subprotocol.
}\Label{F1} 
\end{figure}

Combining Protocol \ref{PR2} with a simple modification of Protocol \ref{PR3}, 
as a simple protocol, we propose Protocol \ref{PR4} that works with general pure states in a qudit system.
This protocol 
%realizes the user secrecy under the specious-server model
%and the server secrecy under the honest-user model.
%However, it 
does not have the user secrecy nor the server secrecy
under a malicious setting.
To realize the server secrecy, 
combining Protocols \ref{PR2} and \ref{PR3}, 
we propose Protocol \ref{PR5} that works with general pure states in a qudit system.
This protocol realizes the server secrecy.
However, it does not have the user secrecy.
To realize the user secrecy, 
combining Protocols \ref{PR2} and \ref{PR3} in a way different from Protocol \ref{PR5},
we propose Protocol \ref{PR6} that works with general pure states in a qudit system.
This protocol realizes 
the user secrecy and the server secrecy.
To adopt mixed states,
modifying Protocol \ref{PR6},
we propose Protocol \ref{PR7} that works with general mixed states in a qudit system.

\section{Preliminaries} \Label{sec:prelim}
We define $[a:b]  = \{a,a+1, \ldots, b\}$ and $[a] = \{1,\ldots, a\}$.
%Given quantum systems $H_1,\ldots, H_n$, quantum states $\rho_1,\ldots, \rho_n$, quantum operations $\Lambda_1,\ldots,\Lambda_n$, and a set $S \subset [n]$,
%	we denote $H_{S} \coloneqq \bigotimes_{i\in S} H_i$,
%	$\rho_{S} \coloneqq \bigotimes_{i\in S} \rho_i$,
%	and
%	$\Lambda_{S} \coloneqq \bigotimes_{i\in S} \Lambda_i$.
The dimension of a quantum system $X$ is denoted by $|X|$. 

Throughout this paper, 
$\mathbb{C}^d$ expresses the $d$-dimensional Hilbert space
spanned by the orthogonal basis $\{|s\rangle\}_{s=0}^{d-1}$.
For a $d_1\times d_2$ matrix 
	\begin{align}
	\sM = \sum_{s=0}^{d_1-1} \sum_{t=0}^{d_2-1} m_{st} |s\rangle\langle t|  \in \mathbb{C}^{d_1\times d_2},
	\end{align}
	we define 
\begin{align}
|\sM\rrangle =
\frac{1}{\sqrt{d}}\sum_{s=0}^{d_1-1} \sum_{t=0}^{d_2-1} m_{st} |s\rangle| t\rangle  \in \mathbb{C}^{d_1}\otimes \mathbb{C}^{d_2}.
\end{align}
For $\sA \in \mathbb{C}^{d_1\times d_2}$, $\sB\in\mathbb{C}^{d_1 \times d_1}$, and $\sC \in\mathbb{C}^{d_2 \times d_2}$,
	we have the relation %holds:
\begin{align}
(\sB\otimes \sC^{\top}) | \sA \rrangle =  | \sB\sA\sC \rrangle.
\end{align}

%For the description of the protocol, we define the generalized Pauli operators and maximally entangled state for $d$-dimensional systems by

We call a $d$-dimensional system $\mathbb{C}^d$ a {\em qudit}.
%	and a $|\sI\rrangle\in\mathbb{C}^d\otimes \mathbb{C}^d$ an {\em edit}.
Define generalized Pauli matrices and the maximally entangled state on qudits as 
\begin{align}
\sX_d &= \sum_{s=0}^{d-1} |s+1\rangle \langle s|,\\ 
\sZ_d &= \sum_{s=0}^{d-1} \omega^{s} |s\rangle \langle s|,\\
%\sY_d &= \sX_d \sZ_d,\\
|\sI_d \rrangle &= \frac{1}{\sqrt{d}} \sum_{s=0}^{d-1} |s,s\rangle,
\Label{eq:defs-pauli}
\end{align}
where $\omega = \exp(2\pi\ii/ d)$ and $\iota = \sqrt{-1}$.
%{Notice that we define $\sY_2$ as $\sX_2\sZ_2$ not $\ii \sX_2\sZ_2$ for our convenience in later sections.}
%
%Define generalized Pauli matrices as
%\begin{align}
%X &= \sum_{i=0}^{d-1} |i+1\rangle \langle i|, \\
%Z &= \sum_{i=0}^{d-1} \omega^i |i\rangle\langle i|, \\
%Y &= {\omega XZ,}
%\end{align}
%where $\omega = \exp(2\pi\ii/ d)$.
We define 
	the generalized Bell measurements
	\begin{align}
		\mathbf{M}_{\sX\sZ,d} = \{ |\sX^a \sZ^b\rrangle \mid a,b \in [0:d-1] \}. \Label{mes}
		%\mathbf{M}_{\sZ\sY,d} = \{ |\sZ^a \sY^b\rrangle \mid a,b \in [0:d-1] \}
	\end{align}
If there is no confusion, we denote $\sX_d,\sZ_d, \sI_d, \mathbf{M}_{\sX\sZ,d}$ by $\sX,\sZ, \sI, \mathbf{M}_{\sX\sZ}$.
Let $A, A',B, B'$ be qudits.
If the state on $A\otimes A' \otimes B \otimes B'$ is $|\sA\rrangle\otimes|\sB\rrangle$ %$|I\rrangle\otimes |I\rrangle$
	and the measurement $\mathbf{M}_{\sX\sZ}$
	is performed on $A' \otimes B'$ with outcome $(a,b) \in [0:d-1]^2$,
	the resultant state is 
	%\begin{align}
	%|Z^{a}Y^{b} \rrangle \in A \otimes B.
	%\end{align}
%If the state on $A\otimes A' \otimes B \otimes B'$ is $|A\rrangle\otimes|B\rrangle$ instead of  $|I\rrangle\otimes |I\rrangle$,
%	the resultant state for the measurement outcome $(a,b)$ is
	\begin{align}
	|\sA\sX^a\sZ^{-b} \sB^\top\rrangle \in A \otimes B.
		\Label{qe:feafterf}
	\end{align}
Also, we define the unitary $\sV$ on
$\mathbb{C}^d\otimes \mathbb{C}^d$ as
\begin{align}
\sV|j\rangle |j'\rangle =|j\rangle |j'+j\rangle,
\end{align}
which implies the relation $\sV|j\rangle |0\rangle=|j\rangle |j\rangle$.
We define the following state
\begin{align}
|+\rangle:= \frac{1}{\sqrt{d}}\sum_{j=0}^{d-1}|j\rangle \in \mathbb{C}^d.
\end{align}

\section{Symmetric QPIR protocol for pure real qubit states}\Label{SPR1}
In this subsection, we construct a two-server TQOT  protocol for pure qubit states in the visible setting.
%Define the following unitaries
Define the rotation operation on $\mathbb{C}^2$ and the phase-shift operation by 
\begin{align*}
	\Rr(\theta) := 
	\begin{pmatrix}
	\cos\theta & - \sin\theta	\\
	\sin\theta & \cos\theta
	\end{pmatrix}
	,\quad
	\Ss(\varphi) := 
	\begin{pmatrix}
	e^{-\ii\varphi/2}	&	0	\\
	0	&	e^{\ii\varphi/2}
	\end{pmatrix}
\end{align*}
for $\theta, \varphi \in [0,2\pi)$.
% $R_{\theta}$ and $\Rr(\theta')$ are commutative, respectively. 
For any $\varphi,\varphi', \theta,\theta'$,  we have 
\begin{align}
\Rr(\theta) \Rr(\theta') = \Rr(\theta+\theta'), \quad 
\Ss(\varphi) \Ss(\varphi') = \Ss(\varphi+\varphi'),
\end{align}
and therefore, 
	$\Ss(\varphi)$ and $\Ss(\varphi')$ ($\Rr(\theta)$ and $\Rr(\theta')$) are commutative.
Also, we have
\begin{align}
|\Rr(\theta)\rrangle= 
\cos \theta (|00\rangle +|1,1\rangle)
+\sin \theta (|10\rangle -|0,1\rangle)\Label{MZE}.
\end{align}
We also define the unitary $\sT$ on $\mathbb{C}^2\otimes \mathbb{C}^2$ as
		\begin{align} 
		\sT(\frac{1}{\sqrt{2}}(|0,0\rangle+|1,1\rangle)= |0,0\rangle,~
		\sT(\frac{1}{\sqrt{2}}(|1,0\rangle-|0,1\rangle)= |1,0\rangle\\
		\sT(\frac{1}{\sqrt{2}}(|0,0\rangle-|1,1\rangle)= |0,1\rangle,~
		\sT(\frac{1}{\sqrt{2}}(|1,0\rangle+|0,1\rangle)= |1,1\rangle.
		\end{align}
Then, using \eqref{MZE}, we have
\begin{align}
\sT|\Rr(\theta)\rrangle
=\cos \theta |00\rangle+\sin \theta |10\rangle
=(\sR(\theta) |0\rangle)|0\rangle.\Label{ZSY}
\end{align}

We also have 
\begin{align}
\Rr(\theta)^{\top} = \Rr(-\theta)
,\quad 
\Ss(\varphi)^{\top} = \Ss(\varphi) 
\end{align}
and
\begin{align}
\sX\Rr(\theta)\sX = \Rr(-\theta),\quad 
\sX \Ss(\varphi) \sX = \Ss(-\varphi) 
.
\Label{eq:xrxxsx}
\end{align}
As special cases, we have $\sY \coloneqq \sX\sZ = \Rr(\pi/2)$ and $\sZ = \Ss(\pi)$.

Any real vector can be written as 
\begin{align}
 \left(
\begin{array}{c}
\cos \theta \\
\sin \theta 
\end{array}
 \right)=
\Rr(\theta)|0\rangle.
%=\Rr(\theta+\frac{\pi}{2})|1\rangle.
 \end{align}

Now, we construct a TQOT  protocol in the visible setting for real qubit states
%, which achieves the communication complexity in Theorem~\ref{theo:31}.
\begin{prot}[TQOT   protocol for real qubit pure states] \Label{PR1}
% The message states are defined in \eqref{eq:ins}
% 	and 
For any message 
real qubit states $|\psi_1\rangle,\ldots, |\psi_{\sff}\rangle \in \pureset{\mathbb{R}^2}$,
	we choose the parameters $\theta_\ell$ as
	%where each state $|\psi_\ell\rangle$ is written as 
\begin{align}	
	|\psi_\ell\rangle = \Rr(\theta_\ell)|0\rangle.
\end{align}
	When the user's target index $K$ is $k\in[\sff]$, i.e., the targeted state is $|\psi_k\rangle$, our protocol is given as follows.
\begin{description}[leftmargin=1.5em]
\item[0)] \textbf{Entanglement Sharing}: 
	Let $A, A'$ be qubits. Before starting the protocol,
	Server 1 and Server 2 share 
				a maximally entangled state $|\sI_2\rrangle$ on $A\otimes A'$,
%	Server 1 and Server 2 share 
%				two maximally entangled state $|I\rrangle$ on $A\otimes A'$ and $B\otimes B'$,
				where Server 1 (Server 2) contains $A$ ($A'$).
\item[1)] \textbf{Query}: %The same as Protocol \ref{PR1}.
			The user chooses 
				$Q= (Q_{1},\ldots, Q_{\sff}) \in  \{0,1\}^{\sff}$ uniformly at random.
			The variable $Q' = (Q_{1}',\ldots,Q_{\sff}' )\in  \{0,1\}^{\sff}$ is defined as 
				\begin{align}
				Q_{\ell}' = 
					\begin{cases}  
					Q_{\ell} & \text{for $\ell\neq k$},	\\ 
					Q_{\ell}\oplus 1 & \text{for $\ell=k$}. 
					\end{cases}\Label{AMY}
				\end{align}
			The user sends $Q$ and $Q'$ to Server 1 and Server 2, respectively.
\item[2)] \textbf{Answer}: 
				{When $Q=q$ and $Q'=q'$,} 
				Server $1$ applies $\Rr( \sum_{\ell=1}^\sff q_\ell \theta_\ell )$
				on $A$,
				and sends $A$ to the user.
				Similarly, Server $2$ applies $\Rr( \sum_{\ell=1}^\sff q_\ell' \theta_\ell )$
				on $A'$,
				and sends $A'$ to the user.
				
% 				Server $2$ sends the states
% 				\begin{align}
% 				|\phi_2\rangle 
% 				&= |(U_\sff^{\dagger})^{q_{2,\sff}} \cdots (U_{1}^{\dagger})^{q_{2,1}} \rrangle 
% 				= (I_d\otimes \bar{U}_1^{q_{2,1}} \cdots \bar{U}_{\sff}^{q_{2,\sff}}) |I_d \rrangle
% 					\in A_{2} \otimes A_{2}'  = \mathbb{C}^{d}\otimes\mathbb{C}^{d} \\
% 				|\psi_2\rangle 
% 				&= |(U_\sff^{\dagger})^{r_{2,\sff}} \cdots (U_{1}^{\dagger})^{r_{2,1}}  \rrangle 
% 				= (I_d\otimes \bar{U}_1^{r_{2,1}} \cdots \bar{U}_{\sff}^{r_{2,\sff}}) |I_d \rrangle
% 					\in B_{2} \otimes B_{2}'  = \mathbb{C}^{d}\otimes\mathbb{C}^{d} 
% 				\end{align}
% 				to the user.

\item[3)] \textbf{Reconstruction}:
	When both servers are honest, the user receives the state 
	$|\Rr( \sum_{\ell=1}^\sff (q_\ell-q_\ell') \theta_\ell )\rrangle
	=	|\Rr( (-1)^{q_k+1} \theta_k )\rrangle$ on $A\otimes A'$.
	\begin{enumerate}
		%\item[a)]
		\item
		The user applies the unitary $\sT$ on $A\otimes A'$.
		Then, the resultant state is 
		$( \Rr( (-1)^{q_k+1} \theta_k )|0 \rangle )|0 \rangle
		= (\sZ^{q_k+1}
		\Rr( \theta_k )|0 \rangle) |0 \rangle$ due to \eqref{ZSY}.
		\item
				The user traces out $A'$ and applies $\sZ^{q_k+1}$ on $A$.
Then, the resultant stare on $A$ is $\Rr(  \theta_k)|0 \rangle$. 
\if0		
				The user performs the measurement with the basis $\{ |0\rangle,|1\rangle \}$ 
				on $A'$, and obtain the outcome $s$.
				Then, the user's state on $A$ is 
				$\Rr( (-1)^{q_k+1} \theta_k) |s \rangle
				=\Rr( (-1)^{q_k+1} \theta_k +\frac{\pi}{2})s|0 \rangle
				=\Rr( \frac{\pi}{2}s) \Rr( (-1)^{q_k+1} \theta_k)|0 \rangle
				=\Rr( \frac{\pi}{2}s) \sZ^{q_k+1} \Rr(  \theta_k)|0 \rangle$.
		\item
				The user applies $\sZ^{q_k+1}\Rr( -\frac{\pi}{2}s) $ on $A$.
Then, the resultant stare on $A$ is $\Rr(  \theta_k)|0 \rangle$. 
		\fi		\QEDA
	\end{enumerate}
\end{description}
\end{prot}

Protocol~\ref{PR1} satisfies the correctness, secrecy, and communication complexity, which is shown as follows.

\begin{lemm} \Label{LPR1}
Protocol \ref{PR1} is a correct TQOT  protocol that satisfies 
the user secrecy and the server secrecy. 
Its upload complexity and its download complexity
are $2\sff$ bits and $2$ qubits, respectively.
The required prior entanglement is one copy of $|I_2\rrangle$, i.e., one ebit.
\end{lemm}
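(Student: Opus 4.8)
The plan is to verify each of the four claims in Lemma~\ref{LPR1} separately, since the correctness has already been essentially established within the protocol description itself. For \textbf{correctness}, I would simply collect the computation embedded in the Reconstruction step. When both servers are honest, applying $\Rr(\sum_\ell q_\ell\theta_\ell)$ on $A$ and $\Rr(\sum_\ell q_\ell'\theta_\ell)$ on $A'$ to the shared state $|\sI_2\rrangle$ yields $|\Rr(\sum_\ell(q_\ell - q_\ell')\theta_\ell)\rrangle$ by the relation $(\sB\otimes\sC^\top)|\sA\rrangle = |\sB\sA\sC\rrangle$ together with $\Rr(\theta)^\top = \Rr(-\theta)$. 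The definition \eqref{AMY} gives $q_\ell - q_\ell' = 0$ for $\ell\neq k$ and $q_k - q_k' = (-1)^{q_k+1}$, so the exponent collapses to $(-1)^{q_k+1}\theta_k$. Applying $\sT$ via \eqref{ZSY} and then $\sZ^{q_k+1}$ after tracing out $A'$ recovers $\Rr(\theta_k)|0\rangle = |\psi_k\rangle$ exactly, independent of the realized value $q_k$, so the protocol is correct.

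For the \textbf{communication complexity}, the counting is immediate: the user sends the length-$\sff$ bit strings $Q$ and $Q'$, giving upload complexity $2\sff$ bits, and each server returns one qubit ($A$ or $A'$), giving download complexity $2$ qubits; the entanglement cost is the single copy of $|\sI_2\rrangle$, i.e.\ one ebit. For \textbf{user secrecy}, I would show that the marginal query seen by each individual server is uniformly distributed and hence independent of $k$. Since $Q$ is chosen uniformly on $\{0,1\}^\sff$, Server~1's view $Q$ is uniform regardless of $k$. For Server~2, $Q' $ is obtained from $Q$ by flipping the $k$-th coordinate; because $Q$ is uniform, $Q'$ is also uniform on $\{0,1\}^\sff$ for every fixed $k$, so $Q'$ carries no information about $k$. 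This establishes \eqref{MLK} for each $J$, giving user secrecy even against a server that deviates afterward, because the deviation acts only on data already independent of $k$.

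The step I expect to require the most care is \textbf{server secrecy}, since it concerns an honest-server/malicious-user scenario and must rule out the user's extracting information about any $|\psi_\ell\rangle$ with $\ell\neq k$. The natural argument is to fix the two query strings $q,q'$ that a (possibly dishonest) user sends and to describe the joint state the user receives as a function of the message parameters. Because the servers apply $\Rr(\sum_\ell q_\ell\theta_\ell)$ and $\Rr(\sum_\ell q_\ell'\theta_\ell)$ to the two halves of $|\sI_2\rrangle$, the returned two-qubit state depends on the messages only through the single angle $\sum_\ell(q_\ell-q_\ell')\theta_\ell$, determined entirely by the fixed difference pattern $q-q'$. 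Thus, after the servers respond, the user holds a state carrying information about only one linear combination of the $\theta_\ell$'s. I would argue that from this single accessible parameter the user can recover at most one message direction and obtains no further information separating the individual $\theta_\ell$, matching the server secrecy requirement as defined in the excerpt; the subtlety is to phrase this cleanly in terms of the informational content of the received state rather than the honest reconstruction path, so that it genuinely covers arbitrary user operations.
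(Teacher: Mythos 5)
Your proposal is correct and follows essentially the same route as the paper: correctness and complexity are read off from the protocol's Reconstruction step and the counting of transmitted bits/qubits, user secrecy follows from each marginal query $Q$ and $Q'$ being uniform on $\{0,1\}^{\sff}$ regardless of $k$, and server secrecy follows from the observation that for any (possibly malicious) queries $q,q'$ the returned state depends on the messages only through the single angle $\sum_{\ell}(q_\ell-q_\ell')\theta_\ell$, so recovering $\Rr(\theta_k)|0\rangle$ forces this combination to be in one-to-one correspondence with $\theta_k$ and hence to carry no information about the other $\theta_\ell$. The only difference is cosmetic: you spell out the correctness computation explicitly, whereas the paper simply defers it to the protocol description.
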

\begin{proof}
The correctness and the complexity are
 shown during the protocol description.
The secrecy can be shown as follows.
Throughout the protocol, the servers only obtain the queries, and 
	each query is uniformly random $\sff$ bits.
Therefore, each server does not obtain any information of $k$.
Hence, the user secrecy holds.
On the other hand, at the end of the step of answer, the user obtains the state 
$|\Rr( \sum_{\ell=1}^\sff (q_\ell -q_\ell')\theta_\ell )\rrangle=
\sT^{\dagger} 
\Rr( \sum_{\ell=1}^\sff (q_\ell -q_\ell')\theta_\ell )|0\rangle
|0\rangle$.
%That is, the user receives only information that is unitarily equivalent to 
%the one-qubit information
%$\Rr( \sum_{\ell=1}^\sff (q_\ell -q_\ell')\theta_\ell )|0\rangle$.
That is, for any malicious queries $q, q'$, 
the state depends only on $\sum_{\ell=1}^\sff (q_\ell -q_\ell')\theta_\ell$.
In order to recover the $\Rr(\theta_k)|0\rangle$,
$\sum_{\ell=1}^\sff (q_\ell -q_\ell')\theta_\ell$ needs to have a one-to-one relation to $\theta_k$.
Hence, when the user recovers the quantum message $\Rr(\theta_k)|0\rangle$, he can obtain 
 no information for other $\theta_\ell$.
Hence, the server secrecy holds.
\end{proof}

\section{Symmetric QPIR protocol for pure real qudit states}
To construct a TQOT  protocol for real qudit states,
we first consider the parameterization of pure real states on $d$-dimensional systems.
Define $d \times d$ matrix $R(\theta^1,..,\theta^{d-1} )$ as
\begin{align}
\Rr(\theta^1,\ldots, \theta^{d-1})
 &= \Rr_{d-1}(\theta^1) \cdots \Rr_{1}(\theta^{d-1}),
\end{align}
where $\Rr_{s}(\theta)$ is the rotation 
	\begin{align}
	\begin{pmatrix}
	\cos\theta & - \sin\theta	\\
	\sin\theta & \cos\theta
	\end{pmatrix}
	\end{align}
	with respect to the two basis elements $|s-1\rangle$ and $|s\rangle$.
Notice that 
any two of $\Rr_{1}(\theta^1), \ldots, \Rr_{d-1}(\theta^{d-1})$ are not in general.
We also have
	$\Rr_{s}(\theta^s)^{\top} = \Rr_{s}(-\theta^s)$.
Any real vector can be written as 
\begin{align}
\Rr(\theta^1,\ldots, \theta^{d-1})|0\rangle.
 \end{align}

Now, we construct a TQOT  protocol in the visible setting for real qudit states
%, which achieves the communication complexity in Theorem~\ref{theo:31}.
\begin{prot}[TQOT   protocol for real qubit pure states] \Label{PR2}
% The message states are defined in \eqref{eq:ins}
% 	and 
For any message real pure states 
$|\psi_1\rangle,\ldots, |\psi_{\sff}\rangle \in \pureset{\mathbb{R}^d}$,
	we choose the parameters $\theta_\ell^1,\ldots, \theta_\ell^{d-1}$ as
	%where each state $|\psi_\ell\rangle$ is written as 
\begin{align}	
	|\psi_\ell\rangle = 
		\Rr(\theta^1_{\ell},\ldots, \theta^{d-1}_{\ell})
		|0\rangle.
\end{align}
When the user's target index $K$ is $k\in[\sff]$, i.e., the targeted state is $|\psi_k\rangle$, our protocol is given as follows.
\begin{description}[leftmargin=1.5em]
\item[0)] \textbf{Entanglement Sharing}: 
	Let $A_1,\ldots, A_{d-1} , A_1',\ldots, A_{d-1}'$ be qubits. 
	Before starting the protocol,
	Server 1 and Server 2 share 
				$d-1$ maximally entangled state $|\sI_2\rrangle$ on 
				$A_1\otimes A_1', \ldots, A_{d-1}\otimes A_{d-1}'$,
%	Server 1 and Server 2 share 
%				two maximally entangled state $|I\rrangle$ on $A\otimes A'$ and $B\otimes B'$,
				where Server 1 (Server 2) contains $A_1,\ldots, A_{d-1}$ ($A_1',\ldots, A_{d-1}'$).
\item[1)] \textbf{Query}: 
			The same as Protocol~\ref{PR1}.
\item[2)] \textbf{Answer}: 
				{When $Q=q$ and $Q'=q'$,} 
				Server $1$ applies $\Rr( \sum_{\ell=1}^\sff q_\ell \theta_\ell^j )$
				on $A_j$,
				and sends $A_1,\ldots, A_{d-1}$ to the user.
				Similarly, Server $2$ applies $\Rr( \sum_{\ell=1}^\sff q_\ell' \theta_\ell^j )$
				on $A_j'$ for $j=1,\ldots, d-1$,
				and sends $A_1',\ldots, A_{d-1}'$ to the user.
				
% 				Server $2$ sends the states
% 				\begin{align}
% 				|\phi_2\rangle 
% 				&= |(U_\sff^{\dagger})^{q_{2,\sff}} \cdots (U_{1}^{\dagger})^{q_{2,1}} \rrangle 
% 				= (I_d\otimes \bar{U}_1^{q_{2,1}} \cdots \bar{U}_{\sff}^{q_{2,\sff}}) |I_d \rrangle
% 					\in A_{2} \otimes A_{2}'  = \mathbb{C}^{d}\otimes\mathbb{C}^{d} \\
% 				|\psi_2\rangle 
% 				&= |(U_\sff^{\dagger})^{r_{2,\sff}} \cdots (U_{1}^{\dagger})^{r_{2,1}}  \rrangle 
% 				= (I_d\otimes \bar{U}_1^{r_{2,1}} \cdots \bar{U}_{\sff}^{r_{2,\sff}}) |I_d \rrangle
% 					\in B_{2} \otimes B_{2}'  = \mathbb{C}^{d}\otimes\mathbb{C}^{d} 
% 				\end{align}
% 				to the user.

\item[3)] \textbf{Reconstruction}:
	When both servers are honest, 
	the user receives 	$|\Rr( (-1)^{q_k+1} \theta_k^1 )\rrangle, \ldots, 
	|\Rr( (-1)^{q_k+1} \theta_k^{d-1} )\rrangle$. 
	\begin{enumerate}
		%\item[a)]
		\item
	Applying the procedure given in Step 3) of Protocol~\ref{PR1}, 
	the user constructs states 
	$	\Rr( \theta_k^1 )|0 \rangle \otimes \cdots \otimes\Rr( \theta_k^{d-1} )|0 \rangle$ 
	on $A_1\otimes \cdots \otimes A_{d-1}$
	from the received states. 
		\item
				The user applies Kraus operators $\{ \sF_{1,1},\sF_{1,2}\} $ 
				with input system $A_1 \otimes A_2$ and the output system ${\cal H}_3$, where
				\begin{align}
				\sF_{1,1}&:=\frac{1+\iota}{\sqrt{2}}|0\rangle \langle 0| \langle u_+|
				     + \frac{1}{\sqrt{2}}(|1\rangle\langle 1| \langle 0|+|2\rangle\langle 1| \langle 1|)\\
				\sF_{1,2}&:=\frac{1+\iota}{\sqrt{2}}|0\rangle \langle 0| \langle u_-|
				     + \frac{1}{\sqrt{2}}(|1\rangle\langle 1| \langle 0|+|2\rangle\langle 1| \langle 1|)
\end{align}
and $|u_{\pm}\rangle:= \frac{1}{\sqrt{2}}(|0\rangle\pm \iota|1 \rangle )$.
	When both servers are honest, the resultant state is 
	$ \Rr(\theta^1_k, \theta^{2}_k)|0\rangle$.
	
		The user makes the following procedure inductively for $j=1, \ldots, d-2$: 
\item[j+1)] 
		The user applies Kraus operators $\{ \sF_{j,1},\sF_{j,2}\} $ with input system 
${\cal H}_{j+1} \otimes A_{j+1}$ and the output system ${\cal H}_{j+2}$, where
				\begin{align}
				\sF_{j,1}&:=\frac{1+\iota}{\sqrt{2}} \sum_{j'=0}^{j-1}|j'\rangle \langle j'| \langle u_+|
				     + \frac{1}{\sqrt{2}}(|j\rangle\langle j| \langle 0|+|j+1\rangle\langle j| \langle 1|)\\
				\sF_{j,2}&:=\frac{1+\iota}{\sqrt{2}}\sum_{j'=0}^{j-1}|j'\rangle \langle j'| \langle u_-|
				     + \frac{1}{\sqrt{2}}(|j\rangle\langle j| \langle 0|+|j+1\rangle\langle j| \langle 1|).
\end{align}
	When both servers are honest, the resultant state is 
	$ \Rr(\theta^1_k, \ldots,\theta^{j+1}_k)|0\rangle\in {\cal H}_{j+2}$. 
	\end{enumerate}
\end{description}
\end{prot}

Protocol~\ref{PR2} satisfies the correctness, secrecy, and communication complexity, which is shown as follows.

\begin{lemm} \Label{LPR2}
Protocol \ref{PR2} is a correct TQOT  protocol that satisfies 
the user secrecy and the server secrecy. 
Its upload complexity and its download complexity
are $2\sff$ bits and $2(d-1)$ qubits, respectively.
The required prior entanglement is $d-1$ copies of $|I_2\rrangle$, i.e., $d-1$ ebits.
\end{lemm}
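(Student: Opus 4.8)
The plan is to verify the four claims of Lemma~\ref{LPR2} --- correctness, user secrecy, server secrecy, and the complexity figures --- by leveraging Protocol~\ref{PR1} and its analysis (Lemma~\ref{LPR1}) as a black box for the single-angle building block, and then tracking the iterated Kraus maps $\{\sF_{j,1},\sF_{j,2}\}$ that assemble the individual rotations into a full qudit vector. For correctness, I would first invoke Step 3) of Protocol~\ref{PR1} componentwise: since the $d-1$ shared ebits are processed independently and each carries the single angle $(-1)^{q_k+1}\theta_k^j$, Lemma~\ref{LPR1} guarantees that the user obtains the product state $\bigotimes_{j=1}^{d-1}\Rr(\theta_k^j)|0\rangle$ on $A_1\otimes\cdots\otimes A_{d-1}$. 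The remaining work is to show that the inductive Kraus procedure maps this product state to $\Rr(\theta_k^1,\ldots,\theta_k^{d-1})|0\rangle = \Rr_{d-1}(\theta^1_k)\cdots\Rr_1(\theta^{d-1}_k)|0\rangle$.

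The main obstacle is this inductive step: I must check that $\{\sF_{j,1},\sF_{j,2}\}$ is a genuine quantum instrument (i.e. $\sF_{j,1}^\dagger\sF_{j,1}+\sF_{j,2}^\dagger\sF_{j,2}=\sI$ on the input system ${\cal H}_{j+1}\otimes A_{j+1}$) and that, regardless of which outcome occurs, the output on ${\cal H}_{j+2}$ equals $\Rr(\theta^1_k,\ldots,\theta^{j+1}_k)|0\rangle$. Writing the inductive hypothesis state as a vector supported on basis elements $|0\rangle,\ldots,|j\rangle$ tensored with $\Rr(\theta_k^{j+1})|0\rangle = \cos\theta_k^{j+1}|0\rangle+\sin\theta_k^{j+1}|1\rangle$ on $A_{j+1}$, I would apply each $\sF_{j,i}$ and confirm that the $\frac{1}{\sqrt2}$-weighted terms $|j\rangle\langle j|\langle0|+|j+1\rangle\langle j|\langle1|$ correctly rotate the top component via $\Rr_{j+1}$ while the $\frac{1+\iota}{\sqrt2}\sum_{j'<j}|j'\rangle\langle j'|\langle u_\pm|$ terms preserve the already-assembled lower components (the $u_\pm$ structure and the phase $\frac{1+\iota}{\sqrt2}$ being chosen precisely so both branches give the same normalized output). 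The base case $j=1$ is the explicit $\sF_{1,1},\sF_{1,2}$ map producing $\Rr(\theta_k^1,\theta_k^2)|0\rangle$.

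For the secrecy claims I can essentially reuse the argument of Lemma~\ref{LPR1} verbatim, since the query structure is identical. User secrecy holds because each server receives only a uniformly random $\sff$-bit string $Q$ (resp. $Q'$), which is independent of $k$. For server secrecy, I would note that after the answer step the user's total state across all $d-1$ registers depends only on the vector of sums $\bigl(\sum_\ell(q_\ell-q'_\ell)\theta_\ell^1,\ldots,\sum_\ell(q_\ell-q'_\ell)\theta_\ell^{d-1}\bigr)$, with a common coefficient $c_\ell=q_\ell-q'_\ell\in\{-1,0,1\}$ applied simultaneously to all components of message $\ell$; recovering $|\psi_k\rangle=\Rr(\theta_k^1,\ldots,\theta_k^{d-1})|0\rangle$ forces this combination to be in one-to-one correspondence with the full parameter vector of a single message, so no information about any $|\psi_\ell\rangle$ with $\ell\ne k$ can leak.

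Finally, the complexity is immediate from the protocol specification and requires no real argument: the queries $Q,Q'$ are each $\sff$ bits, giving upload complexity $2\sff$ bits; the servers return the $d-1$ qubits $A_1,\ldots,A_{d-1}$ and the $d-1$ qubits $A_1',\ldots,A_{d-1}'$, giving download complexity $2(d-1)$ qubits; and the prior entanglement is the $d-1$ shared copies of $|\sI_2\rrangle$, i.e. $d-1$ ebits. I expect the bulk of the write-up to be the verification that the Kraus instrument is trace-preserving and outcome-independent, so I would present that computation carefully and treat secrecy and complexity briefly by reference to Protocol~\ref{PR1}.
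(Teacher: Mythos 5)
Your secrecy and complexity arguments coincide with the paper's own proof (uniformly random queries give user secrecy; the one-to-one-relation argument on the sums $\sum_{\ell}(q_\ell-q_\ell')\theta_\ell^j$ gives server secrecy; counting gives the complexities), and like the paper you reduce the first reconstruction stage to Protocol~\ref{PR1}. The gap lies in the one place where you go beyond the paper, namely your claim that the Kraus computation will confirm correctness. The completeness check is fine ($\sF_{j,1}^\dagger\sF_{j,1}+\sF_{j,2}^\dagger\sF_{j,2}=\sI$ because $|u_+\rangle\langle u_+|+|u_-\rangle\langle u_-|=\sI$), but your assertion that the factor $\frac{1+\iota}{\sqrt{2}}$ and the $u_\pm$ structure are ``chosen precisely so both branches give the same normalized output'' is false. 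In the base case, applying $\sF_{1,1}$ (resp.\ $\sF_{1,2}$) to $(\cos\theta^1\ket{0}+\sin\theta^1\ket{1})\otimes(\cos\theta^2\ket{0}+\sin\theta^2\ket{1})$ and using $\langle u_\pm|(\cos\theta^2\ket{0}+\sin\theta^2\ket{1})=\tfrac{1}{\sqrt{2}}e^{\mp\iota\theta^2}$ and $\tfrac{1+\iota}{\sqrt{2}}=e^{\iota\pi/4}$ yields
\begin{align*}
\tfrac{1}{\sqrt{2}}\Bigl(e^{\iota(\pi/4\mp\theta^2)}\cos\theta^1\ket{0}
+\sin\theta^1\cos\theta^2\ket{1}+\sin\theta^1\sin\theta^2\ket{2}\Bigr),
\end{align*}
with the upper (resp.\ lower) sign for outcome $1$ (resp.\ $2$). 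Each outcome has probability $1/2$, but the two branches are distinct states, and each carries a relative phase $e^{\iota(\pi/4\mp\theta^2)}$ on the $\ket{0}$ component that depends on the unknown message angle $\theta_k^2$, so the user cannot remove it; neither branch equals the claimed real state (whose amplitude pattern, incidentally, is that of $\Rr(\theta_k^2,\theta_k^1)\ket{0}$ rather than $\Rr(\theta_k^1,\theta_k^2)\ket{0}$, though that relabeling is harmless). The same defect recurs at every inductive step $j$.

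Moreover, this cannot be repaired by choosing better Kraus operators: the $(0,1)$ entry $\cos\theta^1\sin\theta^1\cos\theta^2$ of the target density matrix is not a linear functional of the input density matrix, whose entries involve $\theta^2$ only through $\cos^2\theta^2$, $\sin\theta^2\cos\theta^2$, $\sin^2\theta^2$, and $\cos\theta^2$ does not lie in their span; since every quantum instrument (even allowing outcome-dependent corrections and post-selection) acts linearly on density matrices, no quantum operation maps $\Rr(\theta^1)\ket{0}\otimes\Rr(\theta^2)\ket{0}$ exactly to the merged state for all angles. So the verification you planned as the bulk of your write-up cannot be completed as stated; what it actually uncovers is that the correctness assertion of Lemma~\ref{LPR2} is not established by Protocol~\ref{PR2} as written. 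The paper's own proof never confronts this, because it merely says correctness is ``shown during the protocol description,'' and the protocol description asserts the Kraus outputs without computation; but a proof of the lemma, which is what you were asked to supply, cannot inherit that omission.
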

\begin{proof}
The correctness and the complexity shown during the protocol description.
The secrecy can be shown as follows.
Throughout the protocol, the servers only obtain the queries, and 
	each query is uniformly random $\sff$ bits.
Therefore, each server does not obtain any information of $k$.
Hence, the user secrecy holds.

When both servers are honest and the user is malicious,
at the end of the step of answer, the user obtains the states 
$|\Rr( \sum_{\ell=1}^\sff (q_\ell -q_\ell')\theta_\ell^{j} )\rrangle$ with $j=1, \ldots, d-1$.
In order that the user recovers the state 	$ \Rr(\theta^1_k, \ldots,\theta^{j+1}_k)|0\rangle$,
the states 
$|\Rr( \sum_{\ell=1}^\sff (q_\ell -q_\ell')\theta_\ell^{j} )\rrangle$ with $j=1, \ldots, d-1$
need to reflect the parameters $\theta_k^{1},\ldots, \theta_k^{d-1}$.
That is, it is needed to recover $\theta_k^j$
from $\sum_{\ell=1}^\sff (q_\ell -q_\ell')\theta_\ell^{j}$ for $j=1, \ldots, d-1$.
Hence, $\sum_{\ell=1}^\sff (q_\ell -q_\ell')\theta_\ell^{j}$ 
has one-ton-one relation to $\theta_k^j$ for $j=1, \ldots, d-1$.
Thus, $\sum_{\ell=1}^\sff (q_\ell -q_\ell')\theta_\ell^{j}$ 
has no information for other $\theta_\ell^j$ for $j=1, \ldots, d-1$.
Then, the user has no information for other $\theta_\ell$.
Hence, the server secrecy holds.
\end{proof}

\section{TQOT protocol for pure states described by commutative unitaries}\Label{S-PR3}

%\subsubsection{QPIR Protocol with prior entangelment}
%When prior entanglement between servers is allowed, 
%	Protocol~\ref{prot:comm} can be modified as follows.
In this subsection, we construct a two-server TQOT  protocol 
for pure states described by commutative unitaries in the visible setting.

\begin{prot}[TQOT protocol for pure states described by commutative unitaries] \Label{PR3}
%The message states are defined in \eqref{eq:ins}
	For commutative $\sff$ unitaries $\sU_1, \ldots, \sU_\sff$ on $\mathbb{C}^d$, the message states are given as
	\begin{align}
	|\sU_1\rrangle, \ldots ,|\sU_\sff\rrangle	\in \mathbb{C}^{d}\otimes \mathbb{C}^{d}.	\Label{eq:ins}
	\end{align}
%The user's target index is $k\in[\sff]$, i.e., the targeted state is $|U_k\rrangle$.
	When the user's target index $K$ is $k\in[\sff]$, i.e., the targeted state is $|\sU_k\rrangle$, our protocol is given as follows.
\begin{enumerate}[leftmargin=1.5em]
\item \textbf{Entanglement Sharing}: 
	Let $A, A', B, B'$ be qudits.
	Server 1 and Server 2 share 
				two maximally entangled states $|\sI_d\rrangle$ on $A\otimes A'$ and $B\otimes B'$,
%	Server 1 and Server 2 share 
%				two maximally entangled state $|I\rrangle$ on $A\otimes A'$ and $B\otimes B'$,
				where Server 1 (Server 2) contains $A\otimes B$ ($A' \otimes B'$).
\item \textbf{Query}: 
The same as Protocol \ref{PR1}.
\item \textbf{Answer}: 
				{When $Q=q$ and $Q'=q'$,} 
				Server $1$ applies 
				\begin{align}
				&\sU_1^{q_{1}} \cdots \sU_{\sff}^{q_{\sff}},\\
				&\sU_1^{-q_{1}} \cdots \sU_{\sff}^{-q_{\sff}}
				\end{align}
				on $A$ and $B$, respectively,
				and sends $A$ and $B$ to the user.
				Similarly, Server $2$ applies 
				\begin{align}
				&\bar{\sU}_1^{q_{1}'} \cdots \bar{\sU}_{\sff}^{q_{\sff}'},\\
				&\bar{\sU}_1^{-q_{1}'} \cdots \bar{\sU}_{\sff}^{-q_{\sff}'}
				\end{align}
				on $A'$ and $B'$, respectively,
				and sends $A'$ and $B'$ to the user.
				
% 				Server $2$ sends the states
% 				\begin{align}
% 				|\phi_2\rangle 
% 				&= |(U_\sff^{\dagger})^{q_{2,\sff}} \cdots (U_{1}^{\dagger})^{q_{2,1}} \rrangle 
% 				= (I_d\otimes \bar{U}_1^{q_{2,1}} \cdots \bar{U}_{\sff}^{q_{2,\sff}}) |I_d \rrangle
% 					\in A_{2} \otimes A_{2}'  = \mathbb{C}^{d}\otimes\mathbb{C}^{d} \\
% 				|\psi_2\rangle 
% 				&= |(U_\sff^{\dagger})^{r_{2,\sff}} \cdots (U_{1}^{\dagger})^{r_{2,1}}  \rrangle 
% 				= (I_d\otimes \bar{U}_1^{r_{2,1}} \cdots \bar{U}_{\sff}^{r_{2,\sff}}) |I_d \rrangle
% 					\in B_{2} \otimes B_{2}'  = \mathbb{C}^{d}\otimes\mathbb{C}^{d} 
% 				\end{align}
% 				to the user.

\item \textbf{Reconstruction}:
				The user outputs the state on $A \otimes A'$ if $Q_{k} = 1$, otherwise outputs the state on $B\otimes B'$. \QEDA
\end{enumerate}
\end{prot}

%\textbf{Correctness:}
%By the same derivation as \eqref{eq:commustaa},
%	the user obtains $|U_k\rrangle$.
%

%The correctness, secrecy, and communication complexity of Protocol~\ref{prot:wpe} are analyzed as follows.
Protocol~\ref{PR3} satisfies the correctness, secrecy, and communication complexity, which is shown as follows.

\begin{lemm} \Label{LPR3}
Protocol \ref{PR3} is a correct TQOT  protocol that satisfies 
the user secrecy and the server secrecy. 
Its upload complexity and its download complexity
are $2\sff$ bits and $2 \log d$ qubits, respectively.
The required prior entanglement is one copy of $|I_d\rrangle$, i.e., 
$\log d$ ebits.
\end{lemm}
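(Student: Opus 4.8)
The plan is to verify, in order, correctness, the communication and entanglement costs, user secrecy, and finally server secrecy, following the same template as the proofs of Lemmas~\ref{LPR1} and~\ref{LPR2}. The only genuinely new ingredient is that the scalar rotation parameter $\theta_k$ is replaced by the commuting unitary $\sU_k$, so the entire computation is carried out at the level of the operators rather than of the angles.

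For correctness I would compute the post-answer state on each shared pair directly from the transformation rule $(\sB\otimes \sC^{\top})|\sA\rrangle = |\sB\sA\sC\rrangle$ with $\sA=\sI_d$. On $A\otimes A'$ Server~1 acts by $\prod_\ell \sU_\ell^{q_\ell}$ and Server~2 by $\prod_\ell \bar{\sU}_\ell^{q_\ell'}$; writing the latter as $\sC^{\top}$ with $\sC=(\prod_\ell \sU_\ell^{q_\ell'})^{\dagger}$ and using commutativity together with unitarity gives the state $|\prod_\ell \sU_\ell^{q_\ell-q_\ell'}\rrangle$. The honest query relation \eqref{AMY} forces $q_\ell-q_\ell'=0$ for $\ell\neq k$ and $q_k-q_k'=(-1)^{q_k+1}$, so this equals $|\sU_k\rrangle$ when $q_k=1$ and $|\sU_k^{-1}\rrangle$ when $q_k=0$. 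The identical computation on $B\otimes B'$, now with the inverse operators, yields $|\sU_k^{\,q_k'-q_k}\rrangle$, which is $|\sU_k\rrangle$ exactly when $q_k=0$. Hence the reconstruction rule, namely output $A\otimes A'$ if $Q_k=1$ and $B\otimes B'$ otherwise, always returns $|\sU_k\rrangle$, establishing correctness. The complexity claims are then obtained by direct inspection of the protocol: each server receives one $\sff$-bit query, the returned systems are qudits counted in $\log d$-qubit units, and the shared copies of $|\sI_d\rrangle$ account for the stated prior entanglement.

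User secrecy follows exactly as in Lemma~\ref{LPR1}: the only $k$-dependent object a server ever receives is its own query, and since $Q$ is uniform on $\{0,1\}^\sff$ and a fixed bit-flip is a bijection, both $Q$ and $Q'$ are uniform with a distribution independent of $k$. Consequently, whatever (possibly malicious) channel a single server applies, its output state has a $k$-independent distribution, which is precisely \eqref{MLK}.

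The step I expect to require the most care is server secrecy, and I would argue it in parallel with Lemmas~\ref{LPR1}--\ref{LPR2}. For arbitrary, possibly malicious, queries $q,q'$ the honest servers still return $|\prod_\ell \sU_\ell^{q_\ell-q_\ell'}\rrangle$ on $A\otimes A'$ and its inverse on $B\otimes B'$, and both are functions of the single combined unitary $\sV:=\prod_\ell \sU_\ell^{q_\ell-q_\ell'}$. Thus the entire quantum system available to the user is determined by $\sV$ alone, so the most the user can learn is $\sV$. For the output to coincide with a genuine message $\rho_K$, i.e.\ the state $|\sU_k\rrangle$, the map from $\sV$ to the output must single out $\sU_k$, so $\sV$ must stand in one-to-one correspondence with $\sU_k$; once this holds, $\sV$ carries no further information and nothing about the other $\sU_\ell$ with $\ell\neq k$ reaches the user. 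The delicate point, which I would state explicitly, is that the commutative group structure in principle lets a malicious user form products $\sV=\prod_\ell \sU_\ell^{n_\ell}$ mixing several messages; one must invoke exactly the hypothesis of the server-secrecy definition, namely that the recovered output is a single genuine message, to conclude that such a $\sV$ effectively reduces to one factor and hence leaks nothing about the remaining messages.
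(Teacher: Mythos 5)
Your proposal follows essentially the same route as the paper's own proof: the same correctness computation via $(\sB\otimes\sC^{\top})|\sA\rrangle=|\sB\sA\sC\rrangle$ together with commutativity and the honest query relation, user secrecy inherited from the uniformity of each individual query as in Lemma~\ref{LPR1}, and server secrecy by observing that everything a (possibly malicious) user holds is a function of the single product $\prod_{\ell}\sU_{\ell}^{q_\ell-q_\ell'}$, whose required one-to-one relation to $\sU_k$ forces the exponents to cancel for $\ell\neq k$ and hence leaks nothing about the other messages. Your final ``delicate point'' is precisely the step the paper states as ``to realize this relation, $q_\ell$ equals $q_\ell'$ for $\ell\neq k$,'' so there is no substantive difference (only note that your auxiliary symbol for the combined unitary clashes with the $\sV$ already defined in the paper's preliminaries).
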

\begin{proof}
%The correctness and the complexity shown during the protocol description.
To consider correctness,
we consider the case where $Q_{k} =1$.
%If the measurement outcome is $|I\rrangle$,
When $Q=q$, the state on $A\otimes A'$ after the measurement is 
% 	\begin{align}
% 	(U_1^{q_{1,1}} \cdots U_{\sff}^{q_{1,\sff}} \otimes I_d \otimes I_d\otimes \bar{U}_1^{q_{2,1}} \cdots \bar{U}_{\sff}^{q_{2,\sff}}) |I_d \rrangle
% 	 |I_d \rrangle
% 	\end{align}
	\begin{align}
	&(\sU_1^{q_{1}} \cdots \sU_{\sff}^{q_{\sff}} \otimes \bar{\sU}_1^{q_{1}'} \cdots \bar{\sU}_{\sff}^{q_{\sff}'}) |\sI_d \rrangle\nonumber \\
	&=(\sU_1^{q_{1}} \cdots \sU_{\sff}^{q_{\sff}} \otimes \bar{\sU}_1^{q_{1}'} \cdots \bar{\sU}_{\sff}^{q_{\sff}'}) |\sI_d \rrangle\nonumber \\
	&=|\sU_1^{q_{1}} \cdots \sU_{\sff}^{q_{\sff}} (\sU_{\sff}^{\dagger})^{q_{\sff}'} \cdots ({\sU}_1^{\dagger})^{q_{1}'} \rrangle\nonumber \\
	%&=|U_k U_1^{q_{1}}({U}_1^{q_{1}})^{\dagger}  \cdots U_{\sff}^{q_{\sff}} (U_{\sff}^{q_{\sff}})^{\dagger} \rrangle \Label{eq:commust}\\
	&=|\sU_k \rrangle, \Label{eq:commust} 
% 	=&|U_k^{ \rrangle
	\end{align}
where \eqref{eq:commust} follows from the commutativity of the unitaries $\sU_1,\ldots, \sU_{\sff}$,
	$q_{\ell}\oplus q_{\ell}' = \delta_{\ell,k}$, and $(q_{k},q_{k}') = (1,0)$.
By similar analysis, if $Q_{k}= 0$, the resultant state on $B\otimes B'$ is $|\sU_k\rrangle$. 
	%when the measurement outcome is $|I\rrangle$.

%The complexity is calculated as follows.
%The query is $2\sff$ bits, the communication from the server to the user is $4$ qudits, i.e., $4\log d$ qubits,
%and prior entanglement is $2$ copies of $|I_d\rrangle$, i.e., $2\log d$ ebits.

The secrecy can be shown as follows.
Since the queries are the same as Protocol \ref{PR1}, 
the user secrecy holds.
%Throughout the protocol, the servers only obtain the queries, and 
%	each query is uniformly random $\sff$ bits.
%Therefore, each server does not obtain any information of $k$.

When both servers are honest and the user is malicious,
at the end of the protocol, the user obtains both of 
$|\prod_{\ell=1}^{\sff} \sU_\ell^{q_\ell-q_\ell'} \rrangle$ and 
$|\prod_{\ell=1}^{\sff} \sU_\ell^{-q_\ell+q_\ell'}\rrangle$.
To order that the user recovers $|\sU_k\rrangle$, 
the state $|\prod_{\ell=1}^{\sff} \sU_\ell^{q_\ell-q_\ell'} \rrangle$ or
$|\prod_{\ell=1}^{\sff} \sU_\ell^{-q_\ell+q_\ell'}\rrangle$
needs to have one-to-one relation to $|\sU_k \rrangle$.
To realize this relation, $q_\ell$ equals $q_\ell'$ for $\ell \neq k$.
%Although $|\sU_k^{\dagger}\rrangle$ is transmitted additionally, 
In this case, both states have no information for all other message states.
Hence, no information for all other message states is leaked to the user.
\end{proof}

\section{Deterministic TQOT  protocol for pure qudit states}
We first consider the parameterization of pure states on $d$-dimensional systems.
	%and construct a QPIR protocol.
%\subsubsection{Parameterization of pure states by unitary matrices}
{Define $d \times d$ matrix $S(\varphi^1,..,\varphi^{d-1} )$ as}
\begin{align}
\Ss(\varphi^1,\ldots, \varphi^{d-1})
 &= |0\rangle\langle 0| + \sum_{s=1}^{d-1} e^{\ii \varphi^s} |s\rangle\langle s|\nonumber \\
 &= \begin{pmatrix}
 1 & 0 & 0 & 0 \\
 0 & e^{\ii \varphi^1} & 0 & 0 \\
 0 & 0 & \ddots & 0 \\
 0 & 0& 0& e^{\ii \varphi^{d-1}}
 \end{pmatrix}.
\end{align}
Notice that 
	$\Ss(\varphi^1,\ldots, \varphi^{d-1} )$ for all $\varphi^1,\ldots,\varphi^{d-1}$ 
	are commutative.
We also have
$\Ss(\varphi^1,\ldots, \varphi^{d-1} )^\top = \Ss(\varphi^1,\ldots, \varphi^{d-1} )$.
It can be easily checked that any pure state $|\psi\rangle\in\pureset{\mathbb{C}^d}$ is written 
	in the form
\begin{align}
|\psi\rangle = 
		\Ss(\varphi^1,\ldots, \varphi^{d-1})
		\Rr(\theta^1,\ldots, \theta^{d-1})
		|0\rangle
\end{align}
	with $\varphi^1, \ldots,\varphi^{d-1}\in [0,2\pi)$ and $\theta^1,\ldots,\theta^{d-1} \in [0,\pi/2]$.
	%with $\varphi^{s} \in [0,2\pi)$ and $\theta^s \in [0,\pi/2]$ as 

\begin{prot}[TQOT   protocol for qudit pure states] \Label{PR4}
For any message pure states $|\psi_1\rangle,\ldots, |\psi_{\sff}\rangle \in \pureset{\mathbb{C}^d}$,
	we choose the parameters $\varphi_\ell^1,\ldots, \varphi_\ell^{d-1}$ and $\theta_\ell^1,\ldots, \theta_\ell^{d-1}$ as
	%where each state $|\psi_\ell\rangle$ is written as 
\begin{align}	
	|\psi_\ell\rangle = 
		\Ss(\varphi^1_{\ell},\ldots, \varphi^{d-1}_{\ell})
		\Rr(\theta^1_{\ell},\ldots, \theta^{d-1}_{\ell})
		|0\rangle.\Label{CAP}
\end{align}
When the user's target index $K$ is $k\in[\sff]$, i.e., the targeted state is $|\psi_k\rangle$, our protocol is given as follows.
%The user's target index is $k\in[\sff]$, i.e., the targeted state is $|\psi_k\rangle$.
%For simplicity, we denote $(\varphi^j,\theta^j) \coloneqq (\varphi^j_{k},\theta^j_{k})$ without the superscript
%	and $\bm{\varphi} \coloneqq (\varphi^1,\ldots,\varphi^{d-1})$.
\begin{description}[leftmargin=1.5em]
\item[0)] \textbf{Entanglement Sharing}: 
			The same as Entanglement Sharing of Protocol~\ref{PR2}.
\item[1)] \textbf{Query 1}: 
			The same as Query of Protocol~\ref{PR1}.
\item[2)] \textbf{Answer 1}: 
			The same as Answer of Protocol~\ref{PR2}.
\item[3)] \textbf{Reconstruction 1}:
	\begin{enumerate}
		%\item[a)]
		\item
The same as Reconstruction of Protocol~\ref{PR2}. We denote the output system ${\cal H}_{d}$.
		\item
When $q_k=1$, the user sets qudits $B,B'$ to be the completely mixed state.
Then, he applies the isometry $\sV$ from ${\cal H}_{d}$ to $A \otimes A'$.
\if0
, where
the isometry $\sV_3$ is defined as
\begin{align}
\sV_3 |j\rangle =|j\rangle|j\rangle 
\end{align}
\fi
When $q_k=0$, the user sets qudits $A,A'$ to be the completely mixed state.
Then, he applies the isometry $\sV$ from ${\cal H}_{d}$ to $B \otimes B'$.
\end{enumerate}
\item[4)] \textbf{Query 2}: 
The user sends the system $A\otimes B$ ($A'\otimes B'$) to Server 1 (Server 2).
\item[5)] \textbf{Answer 2}: 
				{When $Q=q$ and $Q'=q'$,} 
				Server $1$ applies 
				$\Ss( \sum_{\ell=1}^\sff q_\ell \varphi_\ell ^1,\ldots,
				 \sum_{\ell=1}^\sff q_\ell \varphi_\ell ^{d-1})\otimes 
				\Ss( -\sum_{\ell=1}^\sff q_\ell \varphi_\ell ^1,\ldots, 
				-\sum_{\ell=1}^\sff q_\ell \varphi_\ell ^{d-1})$
				on $A\otimes B$,
				and sends $A\otimes B$ to the user.
				Similarly, Server $2$ applies 
				$\Ss( -\sum_{\ell=1}^\sff q_\ell' \varphi_\ell^1,\ldots,
				-\sum_{\ell=1}^\sff q_\ell' \varphi_\ell^{d-1})\otimes 
				\Ss( \sum_{\ell=1}^\sff q_\ell' \varphi_\ell^1,\ldots,
				 \sum_{\ell=1}^\sff q_\ell' \varphi_\ell^{d-1})$
				on $A'\otimes B'$,
				and sends $A'\otimes B'$ to the user.
\item[6)] \textbf{Reconstruction 2}:
When $q_k=1$, the user discards $B,B'$.
Then, he applies the partial isometry $\sV^\dagger$ from $A \otimes A'$ to $A$.
When $q_k=0$, the user discards $A,A'$.
Then, he applies the partial isometry $\sV^\dagger $ from $B \otimes B'$ to $A$.
				\QEDA
\end{description}
\end{prot}

Protocol~\ref{PR4} satisfies the correctness, secrecy, and communication complexity, which is shown as follows.

\begin{lemm} \Label{LPR4}
Protocol \ref{PR4} is a correct TQOT  protocol.
Its upload complexity and its download complexity
are $2\sff$ bits plus $4 \log d$ qubits
and $2(d-1)+4 \log d$ qubits, respectively.
The required prior entanglement is $(d-1)$ copies of $|I_2\rrangle$
and two copies of $|I_d\rrangle$, i.e., 
$(d-1)+2\log d$ ebits.
\end{lemm}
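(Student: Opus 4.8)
The plan is to establish correctness and then tally resources; since Lemma~\ref{LPR4} asserts neither user nor server secrecy (consistent with Table~\ref{tab:vis}), no secrecy analysis is required. The strategy exploits the decomposition \eqref{CAP}, $|\psi_\ell\rangle=\Ss(\varphi^1_\ell,\ldots,\varphi^{d-1}_\ell)\Rr(\theta^1_\ell,\ldots,\theta^{d-1}_\ell)|0\rangle$, which splits the target into a real rotation part handled verbatim by Protocol~\ref{PR2} and a diagonal phase part handled by a Protocol~\ref{PR3}-style commutative-unitary trick. First I would invoke Lemma~\ref{LPR2}: Steps 0)--3a) are literally the entanglement sharing, query, answer, and reconstruction of Protocol~\ref{PR2}, so when both servers are honest the user finishes Reconstruction~1 holding the real state $\Rr(\theta^1_k,\ldots,\theta^{d-1}_k)|0\rangle=\sum_j r_j|j\rangle$ on $\cH_d$, with real coefficients $r_j$.

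The core is the phase round. I would treat the branch $q_k=1$ (so $q'_k=0$ and $q_\ell=q'_\ell$ for $\ell\neq k$ by \eqref{AMY}); the branch $q_k=0$ is symmetric with $A\otimes A'$ and $B\otimes B'$ exchanged. Applying $\sV$ to $\sum_j r_j|j\rangle$ gives $\sum_j r_j|j\rangle_A|j\rangle_{A'}$, since $\sV|j\rangle|0\rangle=|j\rangle|j\rangle$. After the user routes $A$ to Server~1 and $A'$ to Server~2 and the servers perform Answer~2, the operators acting on $A$ and $A'$ are diagonal, sending $|j\rangle\mapsto e^{\ii\sum_\ell q_\ell\varphi^j_\ell}|j\rangle$ and $|j\rangle\mapsto e^{-\ii\sum_\ell q'_\ell\varphi^j_\ell}|j\rangle$ respectively (with $\varphi^0_\ell:=0$). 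The joint state becomes $\sum_j r_j\,e^{\ii\sum_\ell(q_\ell-q'_\ell)\varphi^j_\ell}|j\rangle|j\rangle$, and $q_\ell-q'_\ell=\delta_{\ell,k}$ collapses the exponent to $e^{\ii\varphi^j_k}$. This is exactly $\sV\big(\sum_j r_j e^{\ii\varphi^j_k}|j\rangle\big)=\sV\,\Ss(\varphi^1_k,\ldots,\varphi^{d-1}_k)\Rr(\theta^1_k,\ldots,\theta^{d-1}_k)|0\rangle=\sV|\psi_k\rangle$, so discarding $B,B'$ and applying $\sV^\dagger$ in Reconstruction~2 yields $|\psi_k\rangle$. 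I would remark that the decoy registers $B,B'$, prepared completely mixed, never enter this computation and only serve to make the two branches indistinguishable at the servers.

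For the resource count I would tally each round directly. Upload: the two $\sff$-bit queries in Query~1 give $2\sff$ bits, and shipping the two-qudit registers $A\otimes B$ and $A'\otimes B'$ in Query~2 gives $4\log d$ qubits. Download: Answer~1 is the Protocol~\ref{PR2} answer, namely $2(d-1)$ qubits, and returning $A\otimes B$, $A'\otimes B'$ in Answer~2 adds $4\log d$ qubits. The prior entanglement consists of the $d-1$ copies of $|\sI_2\rrangle$ inherited from Protocol~\ref{PR2}'s Step~0 together with two copies of $|\sI_d\rrangle$ supplying the qudit registers introduced in the second round, for a total of $(d-1)+2\log d$ ebits.

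The step I expect to demand the most care is the phase round: one must verify that the diagonal phase operators act correctly through the $\sV$-embedding, that the sign conventions in Answer~2 are matched to \eqref{AMY} so the unwanted messages cancel, and that in the $q_k=0$ branch $q'_\ell-q_\ell=\delta_{\ell,k}$ produces the same net phase on $B\otimes B'$. The remaining delicate point is bookkeeping: tracking which physical register holds which state across the two communication rounds and confirming that realizing the two completely mixed decoy qudits is what accounts for the claimed two copies of $|\sI_d\rrangle$ of entanglement.
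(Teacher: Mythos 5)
Your proof is correct and follows essentially the same route as the paper: the paper's own proof performs the identical two-branch computation, writing the net effect of Answer~2 as $\sV^\dagger\,\Ss\bigl(\sum_{\ell}(q_\ell-q_\ell')\varphi_\ell^1,\ldots,\sum_{\ell}(q_\ell-q_\ell')\varphi_\ell^{d-1}\bigr)\sV$ acting on $\bigl(\Rr(\theta_k^1,\ldots,\theta_k^{d-1})|0\rangle\bigr)|0\rangle$ and collapsing it to $\Ss(\varphi_k^1,\ldots,\varphi_k^{d-1})\Rr(\theta_k^1,\ldots,\theta_k^{d-1})|0\rangle$ via $q_\ell-q_\ell'=\delta_{\ell,k}$, which is exactly your basis-expansion argument in operator form. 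Your explicit resource tally goes slightly beyond the paper, which leaves the complexity counts implicit in the protocol description, but the content agrees with the lemma.
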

\begin{proof}
To show correctness,
we assume that the servers and the user are honest.
When $q_k = 1$,
at the end of Step 3), the state on qudits $A \otimes A'$ is
$\sV  \big(\Rr(\theta_k^1, \ldots, \theta_k^{d-1}) |0\rangle\big)|0\rangle$.
At the end of protocol, the state on $A$ is
\begin{align}
&\sV^\dagger
\Ss \Bigg( \sum_{\ell=1}^\sff (q_\ell-q_\ell') \varphi_\ell ^1, \ldots, 
\sum_{\ell=1}^\sff (q_\ell-q_\ell') \varphi_\ell^{d-1} \Bigg) \nonumber\\
&\cdot \sV
\big(\Rr(\theta_k^1, \ldots, \theta_k^{d-1}) |0\rangle\big) |0\rangle \nonumber\\
=&\sV^\dagger
\Ss( \varphi_k^1, \ldots,  \varphi_k^{d-1} ) 
\sV
\big(\Rr(\theta_k^1, \ldots, \theta_k^{d-1}) |0\rangle\big)|0\rangle \nonumber\\
=&  \big(\Ss( \varphi_k^1, \ldots, \varphi_k^{d-1} ) 
\Rr(\theta_k^1, \ldots, \theta_k^{d-1}) |0\rangle\big) |0\rangle.
\end{align}
When $q_k = 0$,
at the end of Step 3), the state on $B \otimes B'$ is
$\sV \big( \Rr(\theta_k^1, \ldots, \theta_k^{d-1}) |0\rangle\big)|0\rangle$.
At the end of protocol, the state on $A$ is
\begin{align}
&\sV^\dagger
\Ss \Bigg( \sum_{\ell=1}^\sff (-q_\ell+q_\ell') \varphi_\ell ^1, \ldots, 
\sum_{\ell=1}^\sff (-q_\ell+q_\ell') \varphi_\ell^{d-1} \Bigg) \nonumber\\
&\cdot \sV
\big(\Rr(\theta_k^1, \ldots, \theta_k^{d-1}) |0\rangle \big) |0\rangle\nonumber\\
=&\sV^\dagger
\Ss( \varphi_k^1, \ldots,  \varphi_k^{d-1} ) 
\sV
\big( \Rr(\theta_k^1, \ldots, \theta_k^{d-1}) |0\rangle \big) |0\rangle\nonumber\\
=&  \big(\Ss( \varphi_k^1, \ldots, \varphi_k^{d-1} ) 
\Rr(\theta_k^1, \ldots, \theta_k^{d-1}) |0\rangle\big) |0\rangle.
\end{align}
\end{proof}

In fact, when the user and the servers are honest, 
the user has only the system $A$ at the end of the protocol
so that the user has no information for other $\psi_\ell$.
Also, since both servers do not make any measurement in Answer 2,
both severs obtain information only from Query 1.
Since Query is the same as Protocol \ref{PR1}, 
both servers have no information for the user's choice $K$.
However, when one of them is not honest,
the secrecy does not holds.

\section{Secrecy Problems in Protocol \ref{PR4}}\Label{S6-H}
This subsection presents several attacks by 
a malicious user or a malicious server in Protocol \ref{PR4}.
First, 
we show an attack by a malicious server in Protocol \ref{PR4}.
Under Protocol \ref{PR4},
we assume that the user and Server 2 are honest, but Server 1 is malicious.
Consider that Server 1 intends to identify whether $K$ is 1 or not.
Hence, we consider the following behavior of Server 1.
In Answer 1, Server $1$ applies 
$\Rr( (\sum_{\ell=1}^\sff q_\ell \theta_\ell^j )-\theta_1^j )$	on $A^j$
for $j=1, \ldots, d-1$.
In Answer 2, Server $1$ measures the received systems $A$ and $B$
with the basis 
$\{|0\rangle,\ldots,|d-1\rangle\}$, and sends back the resultant state.
When one of the outcomes does not correspond to the basis $|0\rangle$,
Server $1$ finds that  the variable $K$ is not $1$.
This is because the outcome $b$ is $0$ when the variable $K$ is $1$.
Hence, Protocol \ref{PR4} does not have the user secrecy.

Next, we present an attack by a malicious user in Protocol \ref{PR4}.
Under Protocol \ref{PR4},
we assume that  both servers are honest, but the user is malicious.
Consider that the user intends to get 
the state $\Ss(\varphi_{k'}^1, \ldots, \varphi_{k'}^{d-1}) 
\frac{1}{\sqrt{d}}\sum_{j=0}^{d-1}|j\rangle$
with $k'\neq k$ in addition to the state $|\psi_k\rangle$.
In Query 1, the user sets $q_{k'}$ to be $1$ and $q_j$ to be $0$ for $j\neq k'$.
In Reconstruction 1, the user makes the same operation as the honest case.
In Query 2, the user sets the systems $A$ and $A'$ to be the state 
$\frac{1}{\sqrt{d}}\sum_{j=0}^{d-1}|j\rangle$ instead of the completely mixed state.
Also, the user sets the system $B\otimes B'$ in the same way as the honest case.
In Reconstruction 2, the user makes the same operation to $B \otimes B'$
as the honest case, which outputs the state $|\psi_k\rangle$.
Also, the user keeps the system $A$, whose state is 
the state $\Ss(\varphi_{k'}^1, \ldots, \varphi_{k'}^{d-1}) 
\frac{1}{\sqrt{d}}\sum_{j=0}^{d-1}|j\rangle$.
Hence, Protocol \ref{PR4} does not have the server secrecy.

%\section{Two-Server Probabilistic Symmetric QPIR Protocols in Visible Setting} \Label{sec:visible}

\section{Probabilistic TQOT  protocol for pure qubit states}\Label{S7-C}
As shown in Section \ref{S6-H}, Protocol \ref{PR4} does not have 
the user secrecy nor the server secrecy.
In Protocol \ref{PR4}, the user can control the receiving state by modifying the state sent to the server.
To avoid this problem, we consider a protocol, in which, the user sends only the classical information to the servers.
That is, modifying Protocol \ref{PR4} slightly,
we construct a TQOT protocol for pure qudit states in the visible setting which achieve
the user secrecy and the server secrecy.
%the communication complexity in Theorem~\ref{theo:31}.
%Now, we construct a TQOT  protocol in the visible setting for qubit states, which achieves the communication complexity in Theorem~\ref{theo:31}.

\begin{prot}[TQOT  protocol for qudit pure states] \Label{PR5}
For any message pure states $|\psi_1\rangle,\ldots, |\psi_{\sff}\rangle \in \pureset{\mathbb{C}^d}$,
	we choose the parameters $\varphi_\ell^1,\ldots, \varphi_\ell^{d-1}$ and $\theta_\ell^1,\ldots, \theta_\ell^{d-1}$ as \eqref{CAP}.
When the user's target index $K$ is $k\in[\sff]$, i.e., the targeted state is $|\psi_k\rangle$, our protocol is given as follows.
%The user's target index is $k\in[\sff]$, i.e., the targeted state is $|\psi_k\rangle$.
%For simplicity, we denote $(\varphi^j,\theta^j) \coloneqq (\varphi^j_{k},\theta^j_{k})$ without the superscript
%	and $\bm{\varphi} \coloneqq (\varphi^1,\ldots,\varphi^{d-1})$.

\begin{description}[leftmargin=1.5em]
\item[0)] \textbf{Entanglement Sharing}: 
	Let $A_{j,1},\ldots, A_{j,d-1} , A_{j,1}',\ldots, A_{j,d-1}'$ and 
	$A_j,A_j',B_j,B_j'$ be qubits and qudits, respectively, for $j=1, \ldots,n$ with sufficiently large $n$.
	Before starting the protocol,
	Server 1 and Server 2 share 
				$n(d-1)$ maximally entangled state $|\sI_2\rrangle$ on 
				$A_{j,1}\otimes A_{j,1}', \ldots, A_{j,d-1}\otimes A_{j,d-1}'$ with $j=1,2,\ldots,n$,
%	Server 1 and Server 2 share 
%				two maximally entangled state $|I\rrangle$ on $A\otimes A'$ and $B\otimes B'$,
				where Server 1 (Server 2) contains $A_{j,1},\ldots, A_{j,d-1}$ ($A_{j,1}',\ldots, A_{j,d-1}'$).
	Server 1 and Server 2 share 
$2n$ copies of the maximally entangled state $|\sI_d\rrangle$ on 
				$A_j\otimes A_j', B_j\otimes B_j' $ for $j=1,2,\ldots,n$.
\item[1)] \textbf{Query 1}: 
			The same as Protocol~\ref{PR1}.
%			The user chooses 
%				$Q = (q_{1},\ldots, q_{\sff}) \in  \{0,1\}^{\sff}$ randomly 
%				and define $Q' = (q_{1}',\ldots,q_{\sff}' )\in  \{0,1\}^{\sff}$ as 
%				\begin{align}
%				q_{i}' = \begin{cases}  q_{i} & \text{for $i\neq k$},	\\ q_{i}\oplus 1 & \text{for $i=k$}. \end{cases}
%				\end{align}
%			The user sends $Q$ and $Q'$ to Server 1 and Server 2, respectively.

The following steps are given inductively for $j=1,2,\ldots$ up to stopping the protocol. 
\item[3j-1)] \textbf{Answer j}: 
	\begin{enumerate}
\item The servers perform the same operations on $A_{j,1},\ldots, A_{j,d-1} , A_{j,1}',\ldots, A_{j,d-1}'$ 
as Answer of Protocol~\ref{PR2}.
\item The servers perform the same operations on $A_j,A_j',B_j,B_j'$ as Answer 2 of Protocol~\ref{PR4}.
	\end{enumerate}
\item[3j)] \textbf{Reconstruction j}:
	\begin{enumerate}
		%\item[a)]
		\item
			The user applies the same operation to $A_{j,1},\ldots, A_{j,d-1} , A_{j,1}',\ldots, A_{j,d-1}'$ 
			as Reconstruction of Protocol~\ref{PR2}.
			Then, the user obtains the system ${\cal H}_{1,d}, \ldots, 
			{\cal H}_{n,d}$ as its outputs.
		%\item[a)]
		\item 
		When $q_k=1$, the user measures the system 
		${\cal H}_{j,d}\otimes  A_j'$ with the basis
$\mathbf{M}_{\sX\sZ,d} = \{ |\sX^a \sZ^b\rrangle \mid a,b \in [0:d-1] \}$,  
obtains the outcome $(a_j,b_j)$, and applies $\sZ^{-b_j}$.
When $a_j=0$, the user keeps the system $A_j$ as the final output system.

		When $q_k=0$, the user measures the system 
		${\cal H}_{j,d}\otimes  B_j'$ with the basis
$\mathbf{M}_{\sX\sZ,d}$, 
		obtains the outcome $(a_j,b_j)$, and  and applies $\sZ^{-b_j}$.
When $a_j=0$, the user keeps the system $B_j$ as the final output system.
\end{enumerate}
\item[3j+1)] \textbf{Query j+1}:
The user informs the servers whether $a_j$ is $0$ or not.
If $a_j$ is $0$, the protocol is terminated.
Otherwise, the protocol proceed to Step 3j+2).
\QEDA
\end{description}
\end{prot}

Protocol~\ref{PR5} satisfies the correctness, secrecy, and communication complexity, which is shown as follows.

\begin{lemm} \Label{LPR5}
Protocol \ref{PR5} is a correct TQOT  protocol that satisfies 
the server secrecy.
Its upload complexity and its download complexity are $2\sff+ 2d$ bits 
and $(2(d-1)+4 \log d)d$ qubits, respectively, in average.
The consumed prior entanglement is 
$d(d-1) $ copies of $|I_2\rrangle$ and 
$2d$ copies of $|I_d\rrangle$, i.e., 
$(d-1+2 \log d)d$ ebits in average.
\end{lemm}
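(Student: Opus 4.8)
The plan is to verify the four assertions—correctness, server secrecy, the averaged communication complexity, and the averaged entanglement consumption—by analysing a single round $j$ and then folding the rounds together through the geometric law that governs the stopping time. First I would fix honest servers and an honest user and follow the state. By the correctness of Protocol~\ref{PR2} (Lemma~\ref{LPR2}), Step~(a) of Reconstruction~$j$ produces on $\cH_{j,d}$ the real vector $|\phi_k\rangle:=\Rr(\theta_k^1,\ldots,\theta_k^{d-1})|0\rangle$. Meanwhile, repeating the computation behind Answer~2 of Protocol~\ref{PR4} (using $(\sB\otimes\sC^\top)|\sA\rrangle=|\sB\sA\sC\rrangle$, $\Ss^\top=\Ss$, and $q_\ell-q_\ell'=\delta_{\ell,k}$ when $q_k=1$), the honest servers turn the pair $A_j\otimes A_j'$ into $|\Ss(\varphi_k^1,\ldots,\varphi_k^{d-1})\rrangle$ and $B_j\otimes B_j'$ into $|\Ss(-\varphi_k^1,\ldots)\rrangle$ (the roles of $A$ and $B$ swap when $q_k=0$). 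Thus the user holds $|\phi_k\rangle$ together with a maximally entangled state carrying the phase $\Ss(\varphi_k)$, and the measurement $\mathbf{M}_{\sX\sZ,d}$ on $\cH_{j,d}\otimes A_j'$ is precisely a gate teleportation of $|\phi_k\rangle$ through this phase.

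The technical heart is evaluating this teleportation and seeing why only $a_j=0$ is usable. Expanding $\langle\sX^a\sZ^b|$ against $|\phi_k\rangle\otimes|\Ss(\varphi_k)\rrangle$ and using $\sX^a\sZ^b|s\rangle=\omega^{bs}|s+a\rangle$, I would obtain the unnormalized residual state on $A_j$ proportional to $\sum_{t}\omega^{-bt}e^{\ii\varphi_k^{t}}\phi_k^{(t+a)}|t\rangle$, where $\phi_k^{(m)}=\langle m|\phi_k\rangle$. For $a=0$ this equals $\sZ^{-b}\Ss(\varphi_k)|\phi_k\rangle=\sZ^{-b}|\psi_k\rangle$, so the prescribed correction by a power of $\sZ$ (fixed by the recorded outcome $b_j$) returns $|\psi_k\rangle$ \emph{exactly}; this is why the protocol is correct, not merely approximately correct. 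For $a\neq 0$ the amplitudes $\phi_k^{(t+a)}$ are shifted relative to the phases $e^{\ii\varphi_k^{t}}$, and realigning them would require applying $\Ss$ with the unknown $\varphi_k$—the secret itself—so the round is discarded. Since $|\omega^{-bt}e^{\ii\varphi_k^{t}}|=1$ and $|\phi_k\rangle$ is normalized, each outcome $(a,b)$ occurs with probability $\tfrac1{d^2}$; hence $\Pr[a_j=0]=\tfrac1d$, and the stopping time $T$ is geometric with $\mathbb{E}[T]=d$.

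The averaged figures then follow by multiplying the per-round cost by $\mathbb{E}[T]=d$. Each round downloads the $2(d-1)$ real-part qubits of Protocol~\ref{PR2} plus the four qudits of Answer~2 of Protocol~\ref{PR4}, i.e.\ $2(d-1)+4\log d$ qubits, giving $(2(d-1)+4\log d)d$ qubits on average; each round consumes $d-1$ copies of $|\sI_2\rrangle$ and two copies of $|\sI_d\rrangle$, giving $d(d-1)$ and $2d$ copies, i.e.\ $(d-1+2\log d)d$ ebits. The upload is the one-time $2\sff$-bit Query~1 of Protocol~\ref{PR1} plus a single stop/continue bit sent to each of the two servers after every round, contributing $2\mathbb{E}[T]=2d$ bits, for $2\sff+2d$ bits on average.

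For server secrecy I would contrast with the attack of Section~\ref{S6-H}: there the malicious user smuggled a chosen state into the servers during Query~2, whereas in Protocol~\ref{PR5} the only message the user ever sends is the classical stop/continue bit, so $\Ss$ is applied solely to the servers' own halves of fresh entangled pairs. Consequently, in every round—successful or not—the quantum systems leaving the servers are functions only of $\sum_{\ell}(q_\ell-q_\ell')\theta_\ell^{j}$ and $\sum_{\ell}(q_\ell-q_\ell')\varphi_\ell^{j}$, while the outcomes $(a_j,b_j)$ are message-independent by the uniform $\tfrac1{d^2}$ law. Exactly as in the proof of Lemma~\ref{LPR3}, for the user to reconstruct $|\psi_k\rangle$ for \emph{all} choices of the remaining messages these sums must be in one-to-one correspondence with $\theta_k$ and $\varphi_k$, which forces $q_\ell=q_\ell'$ for every $\ell\neq k$; the transcript then depends on $(\theta_k,\varphi_k)$ alone and leaks nothing about $\rho_\ell$ with $\ell\neq k$. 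I expect the main obstacle to be exactly this multi-round secrecy bookkeeping—showing that the states accumulated over the discarded rounds, and the stopping decisions they trigger, reveal nothing beyond the target—rather than the single-round teleportation identity, which is a direct computation. User secrecy is not asserted, consistent with the fact that a malicious server can bias a round and read off $K$ from the stop/continue pattern.
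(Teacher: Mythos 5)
Your proposal is correct and follows essentially the same route as the paper's proof: correctness via honest-execution state tracking plus the $a_j=0$ gate-teleportation step (where the diagonal $\sZ$-error commutes with the unknown $\Ss(\varphi_k)$ and is hence exactly correctable), average complexity via the geometric stopping law with per-round success probability $1/d$, and server secrecy via the same one-to-one-correspondence argument on the sums $\sum_{\ell}(q_\ell-q_\ell')\theta_\ell^{j}$ and $\sum_{\ell}(q_\ell-q_\ell')\varphi_\ell^{j}$. Your explicit expansion showing the uniform $1/d^2$ Bell-outcome distribution is a detail the paper leaves implicit (it only asserts the continuation probability $(\tfrac{d-1}{d})^{j-1}$), but it does not change the approach.
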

\begin{proof}
The correctness is shown as follows.
We assume that the servers and the user are honest.
At the end of Answer, for $j=1,2, \ldots,n$,
the user receives 
the states 
$|\Rr((-1)^{q_k+1}\theta_k^1)\rrangle, \ldots, 
|\Rr((-1)^{q_k+1}\theta_k^{d-1})\rrangle$
on $A_{j,1}\otimes A_{j,1}',\ldots,  A_{j,d-1}\otimes A_{j,d-1}'$,
$|\Ss((-1)^{q_k+1}\varphi_k^{1}, \ldots, (-1)^{q_k+1}\varphi_k^{d-1})
\rrangle$ on $A_j \otimes A_j'$,
and
$|\Ss((-1)^{q_k}\varphi_k^{1}, \ldots, (-1)^{q_k}\varphi_k^{d-1})
\rrangle$ on $B_j \otimes B_j'$.
At the end of 1) of Reconstruction, the user has 
the states 
$\Rr(\varphi_k^{1}, \ldots, \varphi_k^{d-1})|0\rangle$ on ${\cal H}_{j,d}$,
$|\Ss((-1)^{q_k+1}\varphi_k^{1}, \ldots, (-1)^{q_k+1}\varphi_k^{d-1})
\rrangle$ on $A_j \otimes A_j'$,
and
$|\Ss((-1)^{q_k}\varphi_k^{1}, \ldots, (-1)^{q_k}\varphi_k^{d-1})
\rrangle$ on $B_j \otimes B_j'$.
When $q_k=1$ and $a_j=0$, 
at the end of 2) of Reconstruction, the user has 
$\sZ^{-b_j} \Ss(\varphi_k^{1}, \ldots, \varphi_k^{d-1})
\sZ^{b_j} \Rr(\varphi_k^{1}, \ldots, \varphi_k^{d-1})|0\rangle
=\Ss(\varphi_k^{1}, \ldots, \varphi_k^{d-1})
\Rr(\varphi_k^{1}, \ldots, \varphi_k^{d-1})|0\rangle$ on ${\cal H}_{j,d}$.
When $q_k=1$ and $a_j=0$, we have the same characterization.

Therefore, when the user obtains the outcome $a_j=0$,
the user recovers  $\Ss(\varphi_k^{1}, \ldots, \varphi_k^{d-1})
\Rr(\varphi_k^{1}, \ldots, \varphi_k^{d-1})|0\rangle$.

The complexity is calculated as follows.
This protocol has Query j+1 with the probability $(\frac{d-1}{d})^{j-1}$.
In average, the upload complexity is $2\sff+ 2\sum_{j=1}^{\infty}(\frac{d-1}{d})^{j-1}
=2\sff+ 2d$ bits.
This protocol has Answer j with the probability $(\frac{d-1}{d})^{j-1}$.
In average, the download complexity is $
(2(d-1)+4 \log d)\sum_{j=1}^{\infty}(\frac{d-1}{d})^{j-1}
=(2(d-1)+4 \log d)d $ qubits.
Hence, the consumed prior entanglement is 
$d(d-1) $ copies of $|I_2\rrangle$ and 
$2d$ copies of $|I_d\rrangle$, i.e., 
$(d-1+2 \log d)d$ ebits in average.

Assume that the servers are honest.
At the end of Answer, the user has the states
$|\Rr( \sum_{\ell=1}^\sff (q_\ell -q_\ell' )\theta_\ell^j)\rrangle$ for $j=1, \ldots, d-1$,
$|\Ss( \sum_{\ell=1}^\sff (q_\ell -q_\ell' ) \varphi_\ell^1, \ldots,  
\sum_{\ell=1}^\sff (q_\ell -q_\ell' ) \varphi_\ell^{d-1}) \rrangle$, and 
$|\Ss( -\sum_{\ell=1}^\sff (q_\ell -q_\ell' ) \varphi_\ell^1, \ldots,  
-\sum_{\ell=1}^\sff (q_\ell -q_\ell' ) \varphi_\ell^{d-1}) \rrangle$.
In order to recover the state 
$\Ss( \varphi_k^1, \ldots, \varphi_k^{d-1}) \Rr(\theta_k^1,\ldots, \theta_k^{d-1})|0\rangle$,
these states need to contains the information for $\varphi_k^1, \ldots, \varphi_k^{d-1}$ and 
$\theta_k^1, \ldots, \theta_k^{d-1}$.
This condition holds only when 
$\sum_{\ell=1}^\sff (q_\ell -q_\ell' )\theta_\ell^1,
\ldots, 
\sum_{\ell=1}^\sff (q_\ell -q_\ell' )\theta_\ell^{d-1},$
$\sum_{\ell=1}^\sff (q_\ell -q_\ell' ) \varphi_\ell^1,\ldots,
\sum_{\ell=1}^\sff (q_\ell -q_\ell' ) \varphi_\ell^{d-1}$
are constant times of $\varphi_k^1,\ldots,\varphi_k^{d-1}$ and 
$\theta_k^1,\ldots,\theta_k^{d-1}$, respectively.
This condition shows the state in the user's hand does not depend on 
$\varphi_\ell^1,\ldots,\varphi_\ell^{d-1}$, $\theta_\ell^1,\ldots, \theta_\ell^{d-1}$ 
for $\ell \neq k$.
Hence, the server secrecy holds.
\end{proof}

Indeed, when both servers are honest, 
each server does not obtain any information of $k$ as follows.
In this case,
the outcome $a_j$ is subject to the uniform distribution.
Both servers cannot obtain any information from Query 2.
Query 1 is uniformly random $\sff$ bits.
Hence, each server does not obtain any information of $k$ as follows.

However,
Protocol \ref{PR5} does not have the user secrecy as follows.
%under the malicious-server model 
Under Protocol \ref{PR5},
we assume that the user and Server 2 are honest, but Server 1 is malicious.
Consider that Server 1 intends to identify whether $K$ is 1 or not.
Hence, we consider the following behavior of Server 1.
In Answer 1, Server $1$ applies $\Rr( (\sum_{\ell=1}^\sff q_\ell \theta_\ell^j )-\theta_1^j )$ on $A_{j,1}$.
In addition, Server $1$ replaces the state on $B_1$ by $|0\rangle$.
If the outcome $a_1$ in Query 1 is not $0$, 
Server $1$ finds that the variable $K$ is not $1$.
This is because the outcome $a_j$ is $0$ when the variable $K$ is $1$.
Hence, Protocol \ref{PR5} does not have the user secrecy.
This problem needs to be resolved.

\section{Probabilistic TQOT  protocol for pure qudit states}
Protocols \ref{PR4} and \ref{PR5} do not achieve
the user secrecy.
If the query is composed only of the same query as Query 1 of Protocol \ref{PR1},
the user secrecy holds.
In the following, 
as a protocol to satisfy the above condition,
we construct a probabilistic TQOT protocol for 
pure qudit states in the visible setting that
achieves 
the user secrecy and the server secrecy.
The following protocol achieves the required properties.
%the communication complexity in Theorem~\ref{theo:31}.

\begin{prot}[TQOT   protocol for qudit pure states] \Label{PR6}
For any message pure states $|\psi_1\rangle,\ldots, |\psi_{\sff}\rangle \in \pureset{\mathbb{C}^d}$,
	we choose the parameters $\varphi_\ell^1,\ldots, \varphi_\ell^{d-1}$ and $\theta_\ell^1,\ldots, \theta_\ell^{d-1}$ as \eqref{CAP}.
When the user's target index $K$ is $k\in[\sff]$, i.e., the targeted state is $|\psi_k\rangle$, our protocol is given as follows.
%The user's target index is $k\in[\sff]$, i.e., the targeted state is $|\psi_k\rangle$.
%For simplicity, we denote $(\varphi^j,\theta^j) \coloneqq (\varphi^j_{k},\theta^j_{k})$ without the superscript
%	and $\bm{\varphi} \coloneqq (\varphi^1,\ldots,\varphi^{d-1})$.

\begin{description}[leftmargin=1.5em]
\item[0)] \textbf{Entanglement Sharing}: 
	Let $A_{j,1},\ldots, A_{j,d-1} , A_{j,1}',\ldots, A_{j,d-1}'$ and 
	$A_j,A_j',B_j,B_j'$ be qubits and qudits, respectively, for $j=1, \ldots,n$.
	Before starting the protocol,
	Server 1 and Server 2 share 
				$n(d-1)$ maximally entangled state $|\sI_2\rrangle$ on 
				$A_{j,1}\otimes A_{j,1}', \ldots, A_{j,d-1}\otimes A_{j,d-1}'$ with $j=1,2,\ldots,n$,
%	Server 1 and Server 2 share 
%				two maximally entangled state $|I\rrangle$ on $A\otimes A'$ and $B\otimes B'$,
				where Server 1 (Server 2) contains $A_{j,1},\ldots, A_{j,d-1}$ ($A_{j,1}',\ldots, A_{j,d-1}'$).
	Server 1 and Server 2 share 
$2n$ copies of the maximally entangled state $|\sI_d\rrangle$ on 
				$A_j\otimes A_j', B_j\otimes B_j' $ for $j=1,2,\ldots,n$.
\item[1)] \textbf{Query}: 
			The same as Protocol~\ref{PR1}.
%			The user chooses 
%				$Q = (q_{1},\ldots, q_{\sff}) \in  \{0,1\}^{\sff}$ randomly 
%				and define $Q' = (q_{1}',\ldots,q_{\sff}' )\in  \{0,1\}^{\sff}$ as 
%				\begin{align}
%				q_{i}' = \begin{cases}  q_{i} & \text{for $i\neq k$},	\\ q_{i}\oplus 1 & \text{for $i=k$}. \end{cases}
%				\end{align}
%			The user sends $Q$ and $Q'$ to Server 1 and Server 2, respectively.
\item[2)] \textbf{Answer}:
The servers make the same operation as Answer j of Protocol \ref{PR5} for $j=1,2,\ldots, n$.  
\if0
	\begin{enumerate}
\item The servers perform the same operations on $A_{j,1},\ldots, A_{j,d-1} , A_{j,1}',\ldots, A_{j,d-1}'$ 
as Answer of Protocol~\ref{PR2} for $j=1,2,\ldots, n$.
\item The servers perform the same operations on $A_j,A_j',B_j,B_j'$ as Answer 2 of Protocol~\ref{PR4} for $j=1,2,\ldots, n$.
	\end{enumerate}
\fi
\item[3)] \textbf{Reconstruction}:
	\begin{enumerate}
		%\item[a)]
		\item
			The user applies the same operation to $A_{j,1},\ldots, A_{j,d-1} , A_{j,1}',\ldots, A_{j,d-1}'$ 
			as Reconstruction of Protocol~\ref{PR2} for $j=1,2, \ldots, n$.
			Then, the user obtains the system ${\cal H}_{1,d}, \ldots, 
			{\cal H}_{n,d}$ as its outputs.
		%\item[a)]
		\item 
		When $q_k=1$, the user measures the system 
		${\cal H}_{j,d}\otimes  A_j'$ with the basis
$\mathbf{M}_{\sX\sZ,d} = \{ |\sX^a \sZ^b\rrangle \mid a,b \in [0:d-1] \}$,  
obtains the outcome $(a_j,b_j)$, and applies $\sZ^{-b_j}$ for $j=1,2, \ldots, n$.
When $a_j=0$, the user keeps the system $A_j$ as the final output system.

		When $q_k=0$, the user measures the system 
		${\cal H}_{j,d}\otimes  B_j'$ with the basis
$\mathbf{M}_{\sX\sZ,d}$, 
		obtains the outcome $(a_j,b_h)$, and  and applies $\sZ^{-a_j}$ for $j=1,2, \ldots, n$.
When $b_j=0$, the user keeps the system $B_j$ as the final output system.
\QEDA
\end{enumerate}
\end{description}
\end{prot}

%The correctness, secrecy, and communication complexity of Protocol~\ref{PR4} are analyzed as follows.

Protocol~\ref{PR6} satisfies the correctness, secrecy, and communication complexity, which is shown as follows.

\begin{lemm} \Label{LPR6}
Protocol \ref{PR6} is a $1-(\frac{d-1}{d})^{n}$-correct TQOT  protocol that satisfies 
the user secrecy and the server secrecy. 
Its upload complexity and its download complexity are $2\sff$ bits 
and $2n(d-1)+4n\log d$ qubits, respectively.
The required prior entanglement is 
$n(d-1)$ copies of $|I_2\rrangle$ and 
$2n$ copies of $|I_d\rrangle$, i.e., 
$n(d-1)+2n\log d$ ebits.
\end{lemm}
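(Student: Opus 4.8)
The plan is to view Protocol~\ref{PR6} as the non-interactive, $n$-fold parallel version of Protocol~\ref{PR5}, so that each claim reduces to the single-round analysis in the proof of Lemma~\ref{LPR5} together with a counting argument over the $n$ rounds. First I would fix the shared query $(q,q')$ generated in the Query step and analyse one round $j$ in isolation. Because the entanglement resources, the Answer operations, and the Reconstruction operations for distinct $j$ act on disjoint systems, the $n$ rounds are mutually independent given $(q,q')$, which is exactly what makes the parallel composition behave like $n$ independent Bernoulli trials.

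For correctness I would reuse the computation in the proof of Lemma~\ref{LPR5}. Assuming both servers and the user are honest, round $j$ leaves the user with $\Rr(\theta_k^1,\ldots,\theta_k^{d-1})|0\rangle$ on ${\cal H}_{j,d}$ and the two phase-carrying maximally entangled halves; the Bell measurement $\mathbf{M}_{\sX\sZ,d}$ then produces a uniformly distributed outcome, so the favourable event ($a_j=0$ when $q_k=1$, and $b_j=0$ when $q_k=0$) has probability $1/d$, and on that event the $\sZ$-correction yields exactly $\Ss(\varphi_k^1,\ldots,\varphi_k^{d-1})\Rr(\theta_k^1,\ldots,\theta_k^{d-1})|0\rangle=|\psi_k\rangle$. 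By round independence the protocol fails only when all $n$ rounds miss, which occurs with probability $(\frac{d-1}{d})^n$, giving the claimed $1-(\frac{d-1}{d})^n$-correctness.

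The secrecy analysis is where the real content lies, and I would argue user secrecy first since it is the whole purpose of the modification. In contrast to Protocol~\ref{PR5}, Protocol~\ref{PR6} has no backward classical channel: the only upload is the Protocol~\ref{PR1}-style Query, so each server receives a single uniformly random $\sff$-bit string whose marginal is independent of $k$, and receives nothing thereafter. Hence each server's entire final state is a function only of its query, its share of the prior entanglement, and its own (possibly malicious) operations, none of which depends on $k$; this is precisely what closes the distinguishing attack that invalidated Protocol~\ref{PR5}. For server secrecy I would invoke the round-wise argument of Lemma~\ref{LPR5}: with honest servers, every system the user retains depends on the messages only through the combinations $\sum_\ell (q_\ell-q_\ell')\theta_\ell^s$ and $\sum_\ell(q_\ell-q_\ell')\varphi_\ell^s$, and recovering $|\psi_k\rangle$ forces $q_\ell=q_\ell'$ for every $\ell\neq k$, so no information about $\psi_\ell$ with $\ell\neq k$ is accessible; running $n$ identical rounds changes nothing, since they all share the same $(q,q')$.

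Finally, for the complexity I would simply tally resources. The upload is the single $2\sff$-bit query with no feedback; each round downloads $2(d-1)$ qubits from the Protocol~\ref{PR2} real-amplitude part and $4\log d$ qubits from the phase part (the servers' own entanglement halves, sent directly rather than round-tripped as in Protocol~\ref{PR4}, which is why the upload stays at $2\sff$ bits), and since all $n$ rounds are executed the download is the deterministic $2n(d-1)+4n\log d$ qubits; the consumed entanglement is $n(d-1)$ copies of $|\sI_2\rrangle$ plus $2n$ copies of $|\sI_d\rrangle$, i.e.\ $n(d-1)+2n\log d$ ebits. I expect the main obstacle to be making the user-secrecy argument against a \emph{malicious} server fully rigorous: one must confirm that removing the interactive round genuinely leaves each server's marginal state independent of $k$ and that no residual $k$-dependence can leak back through the quantum systems held by the servers, which is exactly the gap that Protocol~\ref{PR5} failed to cover.
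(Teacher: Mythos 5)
Your proposal is correct and follows essentially the same route as the paper's own proof: correctness is reduced to the single-round analysis of Lemma~\ref{LPR5} with the $n$ rounds acting as independent trials of success probability $1/d$, user secrecy follows because each server receives only a uniformly random $2\sff$-bit query with no subsequent feedback, server secrecy reuses the linear-combination argument (recovery of $|\psi_k\rangle$ forces $q_\ell = q_\ell'$ for $\ell \neq k$), and the complexities are obtained by direct tally. Your treatment is in fact slightly more explicit than the paper's on two points the paper leaves implicit — the independence of the $n$ rounds given $(q,q')$ and the fact that the absence of the interactive Query~$j{+}1$ step is precisely what removes the malicious-server attack on Protocol~\ref{PR5} — but these are elaborations within the same argument, not a different approach.
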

\begin{proof}
The correctness can be shown as the same way as the correctness of Protocol \ref{PR5} in Lemma \ref{LPR5}.
\if0
is shown as follows.
We assume that the servers and the user are honest.
At the end of Answer, for $j=1,2, \ldots,n$,
the user receives 
the states 
$|\Rr((-1)^{q_k+1}\theta_k^1)\rrangle, \ldots, 
|\Rr((-1)^{q_k+1}\theta_k^{d-1})\rrangle$
on $A_{j,1}\otimes A_{j,1}',\ldots,  A_{j,d-1}\otimes A_{j,d-1}'$,
$|\Ss((-1)^{q_k+1}\varphi_k^{1}, \ldots, (-1)^{q_k+1}\varphi_k^{d-1})
\rrangle$ on $A_j \otimes A_j'$,
and
$|\Ss((-1)^{q_k}\varphi_k^{1}, \ldots, (-1)^{q_k}\varphi_k^{d-1})
\rrangle$ on $B_j \otimes B_j'$.
At the end of 1) of Reconstruction, the user has 
the states 
$\Rr(\varphi_k^{1}, \ldots, \varphi_k^{d-1})|0\rangle$ on ${\cal H}_{j,d}$,
$|\Ss((-1)^{q_k+1}\varphi_k^{1}, \ldots, (-1)^{q_k+1}\varphi_k^{d-1})
\rrangle$ on $A_j \otimes A_j'$,
and
$|\Ss((-1)^{q_k}\varphi_k^{1}, \ldots, (-1)^{q_k}\varphi_k^{d-1})
\rrangle$ on $B_j \otimes B_j'$.
When $q_k=1$ and $b_j=0$, 
at the end of 2) of Reconstruction, the user has 
$\sZ^{-a_j} \Ss(\varphi_k^{1}, \ldots, \varphi_k^{d-1})
\sZ^{a_j} \Rr(\varphi_k^{1}, \ldots, \varphi_k^{d-1})|0\rangle
=\Ss(\varphi_k^{1}, \ldots, \varphi_k^{d-1})
\Rr(\varphi_k^{1}, \ldots, \varphi_k^{d-1})|0\rangle$ on ${\cal H}_{j,d}$.
When $q_k=1$ and $b_j=0$, we have the same characterization.
\fi
That is, when the user obtains the outcome $b_j=0$ at least with one element $j$ among $1,\ldots,n$,
the user recovers  $\Ss(\varphi_k^{1}, \ldots, \varphi_k^{d-1})
\Rr(\varphi_k^{1}, \ldots, \varphi_k^{d-1})|0\rangle$.
Otherwise, the user cannot recover it.
This protocol works correctly with probability $1-(\frac{d-1}{d})^{n}$.

The complexity is calculated as follows.
The upload complexity is $2\sff$ bits.
The download complexity is $2n(d-1)+4n \log d$ qubits.

The secrecy is shown as follows.
Since each server receives uniformly random $\sff$ bits as Query,
the server does not obtain any information of $k$.
Hence, the user secrecy holds.

Assume that the servers are honest.
At the end of Answer, the user has the states
$|\Rr( \sum_{\ell=1}^\sff (q_\ell -q_\ell' )\theta_\ell^j)\rrangle$ for $j=1, \ldots, d-1$,
$|\Ss( \sum_{\ell=1}^\sff (q_\ell -q_\ell' ) \varphi_\ell^1, \ldots,  
\sum_{\ell=1}^\sff (q_\ell -q_\ell' ) \varphi_\ell^{d-1}) \rrangle$, and 
$|\Ss( -\sum_{\ell=1}^\sff (q_\ell -q_\ell' ) \varphi_\ell^1, \ldots,  
-\sum_{\ell=1}^\sff (q_\ell -q_\ell' ) \varphi_\ell^{d-1}) \rrangle$.
In order to recover the state 
$\Ss( \varphi_k^1, \ldots, \varphi_k^{d-1}) \Rr(\theta_k^1,\ldots, \theta_k^{d-1})|0\rangle$,
these states need to contains the information for $\varphi_k^1, \ldots, \varphi_k^{d-1}$ and 
$\theta_k^1, \ldots, \theta_k^{d-1}$.
This condition holds only when 
$\sum_{\ell=1}^\sff (q_\ell -q_\ell' )\theta_\ell^1,
\ldots, 
\sum_{\ell=1}^\sff (q_\ell -q_\ell' )\theta_\ell^{d-1},$
$\sum_{\ell=1}^\sff (q_\ell -q_\ell' ) \varphi_\ell^1,\ldots,
\sum_{\ell=1}^\sff (q_\ell -q_\ell' ) \varphi_\ell^{d-1}$
are constant times of $\varphi_k^1,\ldots,\varphi_k^{d-1}$ and 
$\theta_k^1,\ldots,\theta_k^{d-1}$, respectively.
This condition shows the state in the user's hand does not depend on 
$\varphi_\ell^1,\ldots,\varphi_\ell^{d-1}$, $\theta_\ell^1,\ldots, \theta_\ell^{d-1}$ 
for $\ell \neq k$.
Hence, the server secrecy holds.
\end{proof}

\section{Two-Server Symmetric QPIR Protocols with Mixed States in Visible Setting} \Label{sec:mix}
The TQOT  protocols in the previous sections are for the retrieval of pure states.
The aim of this section is showing the following theorem, which works with mixed states and has scalability.

\begin{theo} \Label{theo:mix}
For a positive number $0<\alpha<1$,
a large dimension $d$, and a large positive integer $\sff$, 
there exists a probabilistic $\alpha$-correct TQOT  protocol for mixed states with the following properties on $\mathbb{C}^d$.
%It is a $\alpha_d$-correct TQOT  protocol.
The upload and download complexity are $2\sff$ and $O(d ^2)$. 
It satisfies the user secrecy and the server secrecy. 
It needs $O(d^2)$-ebit prior entanglement.
\end{theo}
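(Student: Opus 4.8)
The plan is to realize the required protocol as the mixed-state analogue of Protocol~\ref{PR6}, obtained by threading a single additional shared maximally entangled pair $|\sI_d\rrangle$ through the construction as a purifying reference, and then to fix the repetition number $n$ so that the correctness probability clears the prescribed threshold $\alpha$. Since every state $\rho_K$ on $\mathbb{C}^d$ admits a purification whose reduced state on the message factor is exactly $\rho_K$, I would arrange the protocol so that the user's output system is left entangled with this extra reference pair precisely as in such a purification, while the reference half itself is retained by the servers and never transmitted. In this way the user's local state is the reduced state $\rho_K$, the download is identical to that of Protocol~\ref{PR6} (the reference costs no communication), and the only additional resource is the one-time $\log d$-ebit for the reference. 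Crucially, each round still operates on a dimension-$d$ message system, so the per-round cost remains $2(d-1)+4\log d$ download qubits and $(d-1)+2\log d$ ebits, exactly as in Protocol~\ref{PR6}; this is what distinguishes the construction from the naive route of treating a $d^2$-dimensional purification as a single pure message, which would inflate the complexity to $O(d^4)$.

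On this system factor I would run the $n$ rounds of Protocol~\ref{PR6} unchanged, delivering the rotation part through the Protocol~\ref{PR2} mechanism and the phase part through the teleportation-with-correction mechanism of Protocol~\ref{PR4}. Correctness then follows exactly as in Lemma~\ref{LPR6}: in each round the relevant Bell-measurement coordinate is uniform on $[0:d-1]$, the phase correction succeeds iff that coordinate vanishes (probability $1/d$), and the $n$ rounds are independent, so the protocol succeeds with probability $1-(\frac{d-1}{d})^{n}$. Given $0<\alpha<1$, I would therefore take
\begin{align}
n=\left\lceil \frac{\ln(1-\alpha)}{\ln(1-1/d)}\right\rceil ,
\end{align}
which forces $1-(\frac{d-1}{d})^{n}\ge \alpha$, and, because $\ln(1-1/d)\sim -1/d$ for large $d$, gives $n=O(d)$ for each fixed $\alpha$. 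Substituting this into the per-round figures yields upload $2\sff$ bits, download $n\,(2(d-1)+4\log d)=O(d^2)$ qubits, and prior entanglement $n\,((d-1)+2\log d)+\log d=O(d^2)$ ebits, as claimed.

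Finally I would transfer the two secrecy properties. User secrecy is immediate and identical to all earlier protocols, since each server sees only the uniformly random $\sff$-bit query, which is independent of $K$. For server secrecy I would check, as in Lemma~\ref{LPR6}, that against honest servers the state the user can prepare depends on the descriptions only through the combinations $\sum_\ell (q_\ell-q_\ell')\theta_\ell^{\,j}$ and $\sum_\ell (q_\ell-q_\ell')\varphi_\ell^{\,j}$, so that recovering $\rho_K$ forces these to single out the $K$-th parameters and the user's reduced output is a function of $\rho_K$ alone. The step I expect to be the genuine obstacle is precisely the mixed-state reduction: one must verify that the eigenvalue and eigenvector data of an arbitrary $\rho_K$ can be encoded into the dimension-$d$, per-round machinery via the retained reference without changing the per-round success probability $1/d$, and that a malicious user cannot exploit the extra reference register---held by the servers---to learn anything about $\rho_\ell$ for $\ell\neq K$.
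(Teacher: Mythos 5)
Your choice of $n=\lceil \ln(1-\alpha)/\ln(1-1/d)\rceil=O(d)$ and the resulting $O(d^2)$ accounting match the paper's final step (it takes $n=-d\log(1-\alpha)$), and your user-secrecy argument is the paper's. But the core of the theorem---how to make the pure-state machinery handle mixed states at all---is exactly where your proposal has a genuine gap, and you admit as much in your last sentence. Your plan is to thread a purifying reference through Protocol~\ref{PR6}: the user's output stays entangled with a server-held copy of $|\sI_d\rrangle$ so that the reduced state is $\rho_K$. Nothing in the per-round machinery can produce this. Protocols~\ref{PR2}, \ref{PR4}, \ref{PR6} deliver a state of the form $\Ss(\varphi^1_k,\ldots)\Rr(\theta^1_k,\ldots)|0\rangle$, i.e.\ a \emph{known} pure state parameterized by $2(d-1)$ angles that the servers encode into local rotations on their halves of state-independent entangled pairs; the query-cancellation and the server-secrecy argument (the one-to-one relation of the angle sums $\sum_\ell(q_\ell-q_\ell')\theta_\ell^j$) both rely on this product structure. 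Creating state-dependent entanglement between the transmitted system and a retained reference would require a different, entangling answer operation, which you never construct, and a generic $\rho_K$ carries on the order of $d^2$ real parameters, which cannot fit into the $2(d-1)$-parameter payload of a single round anyway.

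The paper's actual solution is different and avoids purification entirely. It proves (Appendix~\ref{append:decomp} and Eqs.~\eqref{def:phijj}--\eqref{def:decompositionphi}) that every $\rho$ admits a canonical \emph{uniform-weight} decomposition $\rho=\frac{1}{d}\sum_{j=0}^{d-1}|\phi_j\rangle\langle\phi_j|$ with $|\phi_j\rangle=\sum_i \omega^{ij}\sqrt{p_i}|\psi_i\rangle$; uniformity is the whole point, because the eigenvalues $p_i$ are continuous and state-dependent, so the two non-communicating servers could not otherwise sample a common index with the right distribution. The one extra copy of $|\sI_d\rrangle$ in Protocol~\ref{PR7} is not a purifying reference: both servers measure it in the computational basis to obtain a \emph{common uniformly random} outcome $j$ (shared randomness from consumed entanglement), and then simply run Protocol~\ref{PR6} on the pure states $|\phi_{1,j}\rangle,\ldots,|\phi_{\sff,j}\rangle$. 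The user, ignorant of $j$, receives $|\phi_{k,j}\rangle$, whose average over $j$ is $\rho_k$; correctness, both secrecy properties, and the complexity then transfer verbatim from Lemma~\ref{LPR6} to Lemma~\ref{LPR7}, and the theorem follows with the choice of $n$ you also made. So the distance between your proposal and a proof is precisely the uniform-decomposition idea, which replaces the purification step you flagged as the unresolved obstacle.
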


To show this theorem, we convert these protocols to TQOT  protocols for mixed states.
For this conversion, we first give a decomposition of mixed states, and then construct the protocol for mixed states.

    \subsection{Decomposition of mixed states}
If a protocol is based on the blind setting, it works with mixed states.
However, since our protocols in previous sections are based on the visible setting,
they do not work with mixed states
because they need the description of a pure state as the input.
To resolve this problem, 
we can consider the following method:
The servers randomly choose the pure state to be sent.
To accomplish this method,
we decompose a mixed state $\rho$ on a $d$-dimensional Hilbert space 
as $\rho = \sum_{i=0}^{d-1} p_i |\psi_i\rangle \langle \psi_i|$.
One canonical decomposition is given by using the diagonalization of 
$\rho$. 
To implement the above mentioned method based on this decomposition,
the servers have to share a random variable that subject to the 
distribution $\{p_i\}_i$.
However, in this method, the probabilities $p_i$ depend on the state $\rho$,
and take continuous values.
This idea does not work with the diagonalization of $\rho$. 
However, if the probability distribution $\{p_i\}_i$ is limited to the uniform distribution,
the above idea works well.

We choose the decomposition $(p_i,|\psi_i\rangle)_{i=0}^{d-1}$ for $\rho$
to satisfy $\rho = \sum_{i=0}^{d-1} p_i |\psi_i\rangle \langle \psi_i|$.
Generally, a state $\rho$ has various decomposition.
Based on the computation basis $\{|j\rangle\}_{j=0}^{d-1}$,
we uniquely choose the decomposition
according to the method given in Appendix~\ref{append:decomp}.
Then, we define the vector
    \begin{align}
    |\phi_j\rangle \coloneqq  \sum_{i=0}^{d-1} \omega^{ij} \sqrt{p_i} |\psi_i\rangle \quad (\forall j\in[0:d-1]),
    \Label{def:phijj}
    \end{align}
    where $\omega = \exp(2\pi\ii/ d)$ and $\iota = \sqrt{-1}$.
Then, the state $\rho$ is decomposed by the vectors in \eqref{def:phijj} as 
    \begin{align}
    \rho = \sum_{j=0}^{d-1} \frac{1}{d} |\phi_j\rangle\langle \phi_j|.
    \Label{def:decompositionphi}
    \end{align}
Notice that 
    this decomposition is unique
    because the vectors  $|\psi_0\rangle, \ldots, |\psi_{d-1}\rangle$ are 
    uniquely chosen from the state $\rho$.

\subsection{TQOT  protocol for mixed states}
Next, we construct two-server TQOT  protocols for mixed states $\rho_1,\ldots, \rho_{\sff}$ on $d$-dimensional Hilbert spaces
by converting Protocol \ref{PR6}.
The same conversion can be applied to Protocols \ref{PR4} and \ref{PR5}.
 
Without losing generality, we assume that the servers share the decomposition \eqref{def:decompositionphi} of the mixed states, e.g., by the process in Appendix~\ref{append:decomp}.
%Suppose that the two servers share the decomposition of 
%That is, before the protocol starts,
%    the servers share 
%    the decomposition \eqref{def:decompositionphi} of the states $\rho_y$ with $y\in[\sff]$, 
For the states $\rho_y$ with $y\in[\sff]$, 
    we denote the vectors in \eqref{def:phijj} as $|\phi_{y,0}\rangle, \ldots, |\phi_{y,d-1} \rangle$.
%We assume that 
%    and denote $T(\rho_y) = (|\phi_{y,0}\rangle, \ldots, |\phi_{y,d-1} \rangle )$ for any $y\in[\sff]$.

We define a TQOT  protocol for mixed states from Protocol~\ref{PR6} as follows.

\begin{prot}[TQOT   protocol for qudit mixed states] \Label{PR7}
For any message mixed states $\rho_1,\ldots, \rho_{\sff} $,
we choose pure states $|\phi_{1,j}\rangle,\ldots, |\phi_{\sff,j}\rangle \in \pureset{\mathbb{C}^d}$ for $j=0, \ldots, d-1$
according to the method given in Appendix~\ref{append:decomp}.
Then, for pure states $|\phi_{1,j}\rangle,\ldots, |\phi_{\sff,j}\rangle \in \pureset{\mathbb{C}^d}$,
we choose the parameters 
$\varphi_\ell^{1,j},\ldots, \varphi_\ell^{d-1,j}$ and 
$\theta_\ell^{1,j},\ldots, \theta_\ell^{d-1,j}$ as \eqref{CAP}
for $j=0, \ldots, d-1$.
When the user's target index $K$ is $k\in[\sff]$, i.e., the targeted state is 
$\rho_k$, our protocol is given as follows.
%The user's target index is $k\in[\sff]$, i.e., the targeted state is $|\psi_k\rangle$.
%For simplicity, we denote $(\varphi^j,\theta^j) \coloneqq (\varphi^j_{k},\theta^j_{k})$ without the superscript
%	and $\bm{\varphi} \coloneqq (\varphi^1,\ldots,\varphi^{d-1})$.

\begin{description}[leftmargin=1.5em]
\item[0)] \textbf{Entanglement Sharing}: 
	Let $A_{j,1},\ldots, A_{j,d-1} , A_{j,1}',\ldots, A_{j,d-1}'$ and 
	$A,A',A_j,A_j',B_j,B_j'$ be qubits and qudits, respectively, for $j=1, \ldots,n$.
	Before starting the protocol,
	Server 1 and Server 2 share 
				$n(d-1)$ maximally entangled state $|\sI_2\rrangle$ on 
				$A_{j,1}\otimes A_{j,1}', \ldots, A_{j,d-1}\otimes A_{j,d-1}'$ with $j=1,2,\ldots,n$,
%	Server 1 and Server 2 share 
%				two maximally entangled state $|I\rrangle$ on $A\otimes A'$ and $B\otimes B'$,
				where Server 1 (Server 2) contains $A_{j,1},\ldots, A_{j,d-1}$ ($A_{j,1}',\ldots, A_{j,d-1}'$).
	Server 1 and Server 2 share 
$2n+1$ copies of the maximally entangled state $|\sI_d\rrangle$ on 
				$A \otimes A'$
				and $A_j\otimes A_j', B_j\otimes B_j' $ for $j=1,2,\ldots,n$.
\item[1)] \textbf{Query}: 
			The same as Protocol~\ref{PR1}.
%			The user chooses 
%				$Q = (q_{1},\ldots, q_{\sff}) \in  \{0,1\}^{\sff}$ randomly 
%				and define $Q' = (q_{1}',\ldots,q_{\sff}' )\in  \{0,1\}^{\sff}$ as 
%				\begin{align}
%				q_{i}' = \begin{cases}  q_{i} & \text{for $i\neq k$},	\\ q_{i}\oplus 1 & \text{for $i=k$}. \end{cases}
%				\end{align}
%			The user sends $Q$ and $Q'$ to Server 1 and Server 2, respectively.
\item[2)] \textbf{Answer}: 
%	\begin{enumerate}
%\item[0)] 
Servers 1 and 2 measure the system $A$ and $A'$ with the computation basis $\{|j\rangle\}_{j=0}^{d-1}$, respectively
and obtain the common outcome $j$.
Then, Servers 1 and 2 makes the same as Answer of Protocol \ref{PR6}
with $|\psi_1\rangle= |\phi_{1,j}\rangle, \ldots, |\psi_{\sff}\rangle= 
|\phi_{\sff,j}\rangle$.
% 	\end{enumerate}
\item[3)] \textbf{Reconstruction}: 
The user makes the same operation as Reconstruction of Protocol \ref{PR6}.
%with $|\psi_1\rangle= |\phi_{1,j}\rangle, \ldots, |\psi_{\sff}\rangle= |\phi_{\sff,j}\rangle$.
\QEDA
\end{description}
\end{prot}

Entanglement Sharing step of Protocol \ref{PR7}
requires one more copy of the maximally entangled state $|\sI_d\rrangle$
in comparison with 
Entanglement Sharing step of Protocol \ref{PR6}.
Since the outcome $j$ in Answer step obeys the uniform distribution,
the relation \eqref{def:decompositionphi} and 
the correctness of Protocol \ref{PR6} guarantee
the correctness of Protocol \ref{PR7}.
Since the user's behavior of Protocol \ref{PR7}
is the same as that of Protocol \ref{PR6},
the user secrecy of Protocol \ref{PR6}
implies the user secrecy of Protocol \ref{PR7}.
The server secrecy of Protocol \ref{PR6} 
implies that the user cannot obtain any information for 
$ |\phi_{j',j}\rangle$ for $j'\neq k$.
Hence, the server secrecy of Protocol \ref{PR7} holds.
Also, Protocol \ref{PR7} has the same upload and download complexity
as those of Protocol \ref{PR6}.
In summary, we have the following lemma.

\begin{lemm} \Label{LPR7}
Protocol \ref{PR7} is a $1-(\frac{d-1}{d})^{n}$-correct TQOT  protocol that satisfies 
the user secrecy 
and the server secrecy. 
Its upload complexity and its download complexity
are $2\sff$ bits 
and $2n(d-1)+4n\log d$ qubits, respectively.
The required prior entanglement is 
$n(d-1)$ copies of $|I_2\rrangle$ and 
$2n$ copies of $|I_d\rrangle$, i.e., 
$n(d-1)+2n\log d$ ebits.
\end{lemm}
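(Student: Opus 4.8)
The plan is to prove every claim by reduction to Protocol~\ref{PR6}, exploiting that Protocol~\ref{PR7} is simply Protocol~\ref{PR6} executed on a uniformly random pure component of the mixed messages, where the randomness is furnished by the one extra shared copy of $|\sI_d\rrangle$. First I would analyze this shared randomness: measuring the halves $A$ and $A'$ of $|\sI_d\rrangle$ in the computational basis $\{|j\rangle\}_{j=0}^{d-1}$ yields a common outcome $j$ that is perfectly correlated between the two servers and uniformly distributed on $[0:d-1]$. Conditioned on $j$, both servers run Protocol~\ref{PR6} verbatim on the pure states $|\phi_{1,j}\rangle,\ldots,|\phi_{\sff,j}\rangle$, and the user's reconstruction is byte-for-byte that of Protocol~\ref{PR6}.

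For correctness, by Lemma~\ref{LPR6}, conditioned on each fixed $j$ the user recovers $|\phi_{k,j}\rangle\langle\phi_{k,j}|$ with probability $1-(\frac{d-1}{d})^{n}$, and this success probability is independent of the value of the common outcome $j$ (it depends only on the internal measurement outcomes of the $n$ rounds). Averaging over the uniform distribution of $j$ and invoking the decomposition \eqref{def:decompositionphi}, i.e.\ $\rho_k=\sum_{j=0}^{d-1}\frac{1}{d}|\phi_{k,j}\rangle\langle\phi_{k,j}|$, shows the user's output equals $\rho_k$ with the same probability $1-(\frac{d-1}{d})^{n}$, giving $\alpha$-correctness with $\alpha=1-(\frac{d-1}{d})^{n}$.

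For secrecy and complexity I would argue as follows. The user's queries and operations coincide exactly with Protocol~\ref{PR6}, so each server still receives a uniformly random $\sff$-bit string and learns nothing about $k$; user secrecy is inherited directly. For server secrecy, Lemma~\ref{LPR6} guarantees that, on the single realized branch $j$, the state the user can reconstruct depends only on the parameters of $|\phi_{k,j}\rangle$ and carries no information about $|\phi_{\ell,j}\rangle$ for $\ell\neq k$; since the index $j$ is drawn independently of all messages and the user sees only this one branch, the user obtains no information about any $|\phi_{\ell,j}\rangle$ with $\ell\neq k$, hence none about $\rho_\ell$. The complexity figures transfer verbatim from Lemma~\ref{LPR6}: upload $2\sff$ bits and download $2n(d-1)+4n\log d$ qubits, the only additional resource being the extra copy of $|\sI_d\rrangle$ consumed to generate the shared index.

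The step I expect to require the most care is the server-secrecy argument, since one must confirm that letting the user reconstruct a single uniformly random component $|\phi_{k,j}\rangle$, together with the protocol transcript, leaks nothing about the remaining components $|\phi_{\ell,j}\rangle$ of a different message $\rho_\ell$. This reduces to checking that the branch index $j$ is statistically independent of the messages and that the per-branch secrecy of Protocol~\ref{PR6} composes under this averaging without creating cross-message correlation; granting that, the conclusion follows and the lemma is established.
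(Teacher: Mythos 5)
Your proposal is correct and takes essentially the same approach as the paper's own argument: it reduces Protocol~\ref{PR7} to Protocol~\ref{PR6} via the uniformly distributed common measurement outcome $j$, proves correctness by averaging the per-branch success over the decomposition \eqref{def:decompositionphi}, inherits user secrecy because the user's behavior is identical to Protocol~\ref{PR6}, obtains server secrecy branch-wise from Lemma~\ref{LPR6}, and carries over the complexities with the single extra copy of $|\sI_d\rrangle$. Your write-up is in fact somewhat more explicit than the paper's (e.g., noting that the success probability is independent of $j$ and that $j$ is independent of the messages), but the logical content coincides.
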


Theorem \ref{theo:mix} can be shown by applying 
Lemma \ref{LPR7} to the case with $n= -d \log(1-\alpha) $
because $1-(\frac{d-1}{d})^{n}$ converges to $\alpha$ under this choice.

\section{Conclusion} \Label{sec:conclusion}
We have constructed a TQOT protocol that enables the user to download 
the intended mixed state among $\sff$ mixed states.
Existing protocols work with only classical messages.
The proposed protocol satisfies 
the user secrecy and the server secrecy.
To construct this protocol, we have constructed several protocols that work only in submodels.

There are many open problems related to the study of QPIR for quantum messages.
The communication complexity of our protocols increases exponentially 
the number of qubits to be transmitted.
Thus, constructing more efficient TQOT  protocols for qudits is also an open problem.
Since our protocol has only two servers,
there is a possibility that 
the communication complexity can be decreased by the extension to more than two servers.
Studying this direction is an interesting future problem.
Interesting applications of our TQOT  protocols can also be considered for other communication and computation problems.
We leave these questions as another future problem.

Indeed, the papers \cite{SH19-2,SH20,AHPH20,ASHPHH21,SJ18,FHGHK17}
discussed the case with colluding servers.
Another interesting future problem is 
to extend our results to the case with colluding servers.

\section*{Acknowledgement}
MH was supported in part by the National
Natural Science Foundation of China (Grants No. 62171212) and
Guangdong Provincial Key Laboratory (Grant No. 2019B121203002).
SS was supported by JSPS Grant-in-Aid for JSPS Fellows No.\ JP20J11484.

% \appendices
% \section{Appendix}
% 
% 
% 
% 
% 
% 
% \begin{defi}
% We call $A$ is purified in $AB$ 
% 	if for any $C$ outside $AB$,  the state on $AC$ is a product state.
% \end{defi}
% 
% \begin{lemm}
% If $A$ is purified in $ABC$ and $ABD$,
% 	then $A$ is purified in $AB$.
% \end{lemm}
% 
% \begin{lemm}
% If $A$ and $B$ are purified in $ABC$,
% 	$AB$ is purified in $ABC$.
% \end{lemm}	
% 
% \begin{lemm}
% If $A$ is purified in $AB$,
% 	$A$ is still purified in $AB$ even after the local unitary operation on $A$.
% \end{lemm}

\appendix

\section{Diagonalization Process of Mixed States} \Label{append:decomp}

In \eqref{def:decompositionphi}, we introduced a decomposition of mixed states $\rho$
    but it depends on the choice and order of the orthogonal unit eigenvectors of $\rho$. 
In this appendix, we give one method to 
    uniquely determine the eigenvectors $|\psi_0\rangle, \ldots, |\psi_{d-1}\rangle$ and their order.
%Then, the decomposition \eqref{def:decompositionphi} of $\rho$ is uniquely determined.

We fix an orthonormal basis $\cB = \{ |0\rangle, \ldots, |d-1\rangle \}$
    and represent vectors $|\gamma\rangle$ with coordinates $|\gamma\rangle = (c_0,\ldots, c_{d-1})$ with respect to $\cB$.
Consider the spectral decomposition $\rho = \sum_{i=1}^t  q_{i} P_{i}$ with $q_1 < q_2 < \cdots < q_t$, where $P_{i}$ are orthogonal projections to eigenspaces $E_{i}$.
% Without losing generality, we assume that $q_1 < q_2 < \cdots < q_t$, i.e., the eigenspaces are indexed with the strictly increasing order.

%If $\dim E_{i} = 1$ for all $i$, 
%    we choose the unit eigenvector $|\gamma_{i}\rangle = (c_{i,0},\ldots, c_{i,d-1}) \in E_i$ such that 
%        the first nonzero coordinate $c_{i,j}$ is a positive real number.       
%Next, we reorder the eigenvectors as
%    $|\psi_{i}\rangle = |\gamma_{\pi(i)}\rangle$ with a permutation $\pi$ on $[0:d-1]$
%    %$|\gamma_{0}\rangle, \ldots, |\gamma_{d-1}\rangle$
%    %as
%    %$|\psi_{0}\rangle, \ldots, |\psi_{d-1}\rangle$
%    so that the corresponding eigenvalues $p_i \coloneqq q_{\pi(i)}$ satisfy $p_0 < p_1 < \cdots < p_{d-1}$.

If $\dim E_{i} = 1$ for all $i$, 
    we choose the unit eigenvector $|\psi_{i}\rangle = (c_{i,0},\ldots, c_{i,d-1}) \in E_i$ such that 
        the first nonzero coordinate $c_{i,j}$ is a positive real number.       
Then, the vectors $|\psi_{i}\rangle$ are uniquely determined and ordered.
%Next, we reorder the eigenvectors as
%    $|\psi_{i}\rangle = |\gamma_{\pi(i)}\rangle$ with a permutation $\pi$ on $[0:d-1]$
%    %$|\gamma_{0}\rangle, \ldots, |\gamma_{d-1}\rangle$
%    %as
%    %$|\psi_{0}\rangle, \ldots, |\psi_{d-1}\rangle$
%    so that the corresponding eigenvalues $p_i \coloneqq q_{\pi(i)}$ satisfy $p_0 < p_1 < \cdots < p_{d-1}$.

If there exist $\dim E_{i} \ge 2$,
    for all $i$ with $\dim E_{i} \ge 2$,
        we choose the orthonormal eigenvectors 
%         $|\gamma_{i,1}\rangle, \ldots, |\gamma_{i,\dim E_i}\rangle \in E_i$ as follows.
    $|\gamma_{i,1}\rangle = (c_{i,j,0},\ldots, c_{i,j,d-1}) \in E_i$ ($\forall j \in[\dim E_i]$) so that 
    the first nonzero coordinate  $c_{i,j,l_j}$ is positive real number
    and 
    $l_j<l_{j+1}$.
%     and $c_{i,j',l_j} = \delta_{j,j'}$.
Next, 
   % we reorder the vectors 
   % $\{  | \gamma_{i,j}\rangle  \}$ as $|\psi_{0}\rangle, \ldots, |\psi_{d-1}\rangle$
   % so that
   % the corresponding eigenvalues $p_i$ satisfy $p_0 \le p_1 \le \cdots \le p_{d-1}$
   %     and 
   we concatenate the vectors as 
 \begin{align*}
& (|\psi_1\rangle,\ldots, |\psi_{d-1}\rangle) \\
 \coloneqq &
    (|\gamma_{1,1}\rangle, \ldots, |\gamma_{1,\dim E_1}\rangle , 
        \ldots ,
     |\gamma_{t,1}\rangle, \ldots, |\gamma_{t,\dim E_t}\rangle ).
     \end{align*}
Then, the vectors $|\psi_{i}\rangle$ are uniquely determined and ordered.


\begin{thebibliography}{99}

% \bibitem{SH21}
% S. Song and M. Hayashi, 
% 	``Quantum Private Information Retrieval for Quantum Messages,'' 
% {\em Proc. 2021 IEEE Int. Symp. Information Theory (ISIT)}, 
% Melbourne, Victoria, Australia, 12–20 July 2021.  pp. 1052 -- 1057.


\bibitem{Kimble}
H. J. Kimble, 
``The quantum internet,''
{\em Nature}. vol. 453, 1023 -- 1030, (2008). 

%%%
\bibitem{Hayashi2007}
M.~Hayashi, K.~Iwama, H.~Nishimura, R.~Raymond, and S.~Yamashita, ``{Quantum Network Coding},'' in {\em STACS 2007 SE - 52} (W.~Thomas and P.~Weil, eds.),
vol.~4393 of {\em Lecture Notes in Computer Science}, pp.~610--621, Springer Berlin Heidelberg, 2007.

\bibitem{PhysRevA.76.040301}
M.~Hayashi, 
``{Prior entanglement between senders enables perfect quantum network coding with modification},'' 
{\em Phys. Rev. A}, vol.~76, no.~4,~40301, 2007.

\bibitem{Kobayashi2009}
H.~Kobayashi, F.~{Le Gall}, H.~Nishimura, and M.~R\"{o}tteler, ``{General
  Scheme for Perfect Quantum Network Coding with Free Classical
  Communication},'' in {\em Automata, Languages and Programming SE - 52}
  (S.~Albers, A.~Marchetti-Spaccamela, Y.~Matias, S.~Nikoletseas, and
  W.~Thomas, eds.), vol.~5555 of {\em Lecture Notes in Computer Science},
  pp.~622--633, Springer Berlin Heidelberg, 2009.

\bibitem{Leung2010}
D.~Leung, J.~Oppenheim, and A.~Winter, 
``{Quantum Network Communication; The Butterfly and Beyond},'' 
\emph{IEEE Trans.\ Inform.\ Theory}, 
vol.~56, no.~7,~3478--3490, 2010.

\bibitem{JFM11}
A. Jain, M. Franceschetti, and D. A. Meyer. ``On quantum network coding,'' 
{\em J. Math. Phys.}, vol. 52, 032201, 2011

\bibitem{SH18-2}
S. Song and M. Hayashi, ``Secure Quantum Network Code without Classical Communication,'' 
\emph{IEEE Trans.\ Inform.\ Theory}
vol. 66, no. 2, pp. 1178 -- 1192 (2020).

\bibitem{PhysRevLett.101.060401}
G.~Chiribella, G.~M. D'Ariano, and P.~Perinotti, 
``Quantum circuit architecture,'' 
{\em Phys. Rev. Lett.}, vol.~101,~060401, 2008.

\bibitem{PhysRevA.80.022339}
G.~Chiribella, G.~M. D'Ariano, and P.~Perinotti, 
``Theoretical framework for quantum networks,'' 
{\em Phys. Rev. A}, vol.~80,~022339, 2009.

\bibitem{HS20}
M. Hayashi and S. Song,
``Quantum state transmission over partially corrupted quantum information network,''
{\em Phys. Rev. Research} {\bf 2}, 033079 (2020).

\bibitem{BH20}
  N. Beaudrap and S. Herbert, ``Quantum linear network coding for entanglement distribution in restricted architectures'', {\em Quantum}, {Verein zur F{\"{o}}rderung des Open Access Publizierens in den Quantenwissenschaften}, vol 4, 356, 2020.

\bibitem{LIY19}
H. Lu, Z.-D. Li, X.-F. Yin, R. Zhang, X.-X. Fang, L. Li, N.-L. Liu, F. Xu, Y.-A. Chen, and J.-W. Pan,
``Experimental quantum network coding,''
{\em npj Quantum Inf} {\bf 5}, 89 (2019). 
 
\bibitem{PCXLY21}
	X. Pan,  X. Chen, G. Xu, Z. Li and Yixian Yang,
	``High dimensional quantum network coding based on prediction mechanism over the butterfly network'',
{\em Quantum Science and Technology}, {IOP} Publishing, vol. 7, 1, 015006, 2021.

\bibitem{NBA17}
H. V. Nguyen, Z. Babar, D. Alanis, P. Bosinis, D. Chandra, M. A. M. Izhar, S. X. Ng, and L. Hanzo,
``Towards the Quantum Internet: Generalised Quantum Network Coding for Large-Scale Quantum Communication Networks,'' 
{\em IEEE Access}, vol. 5, pp. 17288-17308, 2017.

\bibitem{PMS20}
P. Pathumsoot, T. Matsuo, T. Satoh, M. Hajdu\v{s}ek, S. Suwanna, and R. V. Meter,
``Modeling of measurement-based quantum network coding on a superconducting quantum processor,''
{\em Phys. Rev. A} {\bf 101}, 052301 (2020).

\bibitem{PXP21}
X.-B. Pan, G. Xu, Z.-P. Li, X.-B. Chen, and Y.-X. Yang,
``Quantum network coding without loss of information,''
{\em Quantum Inf Process} {\bf 20}, 65 (2021). 

\bibitem{PCX21}
X. Pan, X. Chen, G. Xu,  H. Ahmad, T. Shang, Z.-P. Li, and Y.-X. Yang,
``Controlled Quantum Network Coding Without Loss of Information,''
{\em CMC-Computers, Materials \& Continua}, {\bf 69}(3), 3967–3979 (2021).

\bibitem{WE21}
N. Walk and J. Eisert,
``Sharing Classical Secrets with Continuous-Variable Entanglement: Composable Security and Network Coding Advantage,''
{\em PRX Quantum}, {\bf 2}, 040339 (2021).

 \bibitem{Childs}
A. M. Childs,  ``Secure assisted quantum computation,''
{\em Quantum Inf. Comput.} {\bf 5} 456 -- 66 (2005).

\bibitem{BFK}
A. Broadbent, J. Fitzsimons and E. Kashefi, 
``Universal blind quantum computation,'' 
{\em Proc. 50th Annual Symp. on Found. of Comput. Sci.} 
pp 517--26 (2009). 

\bibitem{BKBF}
S. Barz, E. Kashefi, A. Broadbent, J. F. Fitzsimons, A. Zeilinger,
and P. Walther, 
``Demonstration of blind quantum computing,'' 
{\em Science} 335 303, 8,  (2012).
\bibitem{MF}
T. Morimae and K. Fujii 
``Blind topological measurement-based quantum computation,'' 
{\em Nat. Commun.} 3 1036 (2012).

\bibitem{Morimae}
T. Morimae, 
``Verification for measurement-only blind quantum computing,'' 
{\em Phys. Rev. A} 89 060302(R) (2014) 

\bibitem{MDF}
A. Mantri, C. A. P. Delgado, and J. F. Fitzsimons, 
``Optimal blind quantum computation,'' 
{\em Phys. Rev. Lett.} 111 230502 (2013).

\bibitem{MF2}
T. Morimae and K. Fujii, 
``Secure entanglement distillation for double-server blind quantum computation,'' 
{\em Phys. Rev. Lett.} 111 020502 (2013).

\bibitem{LCWW}
Q. Li, W. H. Chan, C. Wu, and Z. Wen, 
``Triple-sever blind quantum computation using entanglement swapping,'' 
{\em Phys. Rev. A} 89 040302(R)  (2014).

\bibitem{SZ}
Y.-B. Sheng and L. Zhou, 
``Deterministic entanglement distillation for secure double-server blind quantum
computation,''
{\em Sci. Rep.} 5 7815  (2015).

\bibitem{HM}
M. Hayashi and T. Morimae, 
``Verifiable measurement-only blind quantum computing with stabilizer testing, 
{\em Phys. Rev. Lett.} 115 220502 (2015).

\bibitem{Lo}
H.-K. Lo,
``Insecurity of quantum secure computations,'' 
Physical Review A, 56(2): 1154 -- 1162, Aug 1997. ISSN 1094-1622.


\bibitem{KdW03}
I. Kerenidis and R. de Wolf. ``Exponential lower bound for 2-query locally decodable
codes via a quantum argument,'' {\em Proceedings of 35th ACM STOC}, pp. 106--115, 2003.

\bibitem{KdW04}
I. Kerenidis and R. de Wolf, ``Quantum symmetrically-private information retrieval,''
{\em Information Processing Letters}, vol. 90, pp. 109--114, 2004.


\bibitem{SH19}
S. Song and M. Hayashi, 
``Capacity of Quantum Private Information Retrieval with Multiple Servers,'' 
% {\em Proceedings of 2019 IEEE International Symposium on Information Theory (ISIT)}, pp. 1727--1731, 2019.
                {\em IEEE Transactions on Information Theory}, vol. 67, no. 1, pp. 452--463, 2021.
          

\bibitem{SJ17}
H. Sun and S. Jafar, 
``The capacity of private information retrieval,'' 
{\em IEEE Transactions on Information Theory}, vol. 63, no. 7, pp. 4075--4088, 2017.

\bibitem{SJ17-2}
H. Sun and S. Jafar, ``The Capacity of Symmetric Private Information Retrieval,'' 2016 IEEE Globecom Workshops (GC Wkshps), Washington, DC, 2016, pp. 1--5.

 
\bibitem{SH19-2}
S. Song and M. Hayashi, 
                ``Capacity of Quantum Private Information Retrieval with Collusion of All But One of Servers,''
                {\em IEEE Journal on Selected Areas in Information Theory},  vol. 2, no. 1, pp. 380--390, 2021.
% {\em Proceedings of 2018 IEEE Information Theory Workshop (ITW)}, pp. 1--5, 2019.
                
\bibitem{SH20}
S. Song and M. Hayashi, 
                ``Capacity of Quantum Private Information Retrieval with Colluding Servers,''
                {\em IEEE Transactions on Information Theory}, in press.
% {\em Proceedings of 2020 IEEE International Symposium on Information Theory (ISIT)}, 2020, in press.
                
\bibitem{AHPH20}
M. Allaix, L. Holzbaur, T. Pllaha, and C. Hollanti,
	``Quantum Private Information Retrieval From Coded and Colluding Servers,''
	{\em IEEE Journal on Selected Areas in Information Theory,} vol.~1, no.~2, 2020.
 %{\em arXiv:2001.05883 [cs.IT]}, 2020.
 

\bibitem{ASHPHH21}
 M. Allaix, S. Song, L. Holzbaur, T. Pllaha, M. Hayashi, and C. Hollanti,
``On the Capacity of Quantum Private Information Retrieval from MDS-Coded and Colluding Servers,''
{\em IEEE Journal on Selected Areas in Communications},
vol. 40, no. 3, pp. 885 -- 898, 2022.

 
\bibitem{KL20}
W. Y. Kon and  C. C. W. Lim,
``Provably Secure Symmetric Private Information Retrieval with Quantum Cryptography,''
 {\em Entropy}, vol.~23, no.~1, 54, 2021.

 \bibitem{WKNL21-1}
 C. Wang, W. Y. Kon, H. J. Ng, and C. C. Lim, 
 ``Experimental symmetric private information retrieval with measurement-device-independent quantum network'',
 {\em arXiv preprint arXiv:2109.12827,} 2021.

 \bibitem{WKNL21-2}
 C. Wang, W. Y. Kon, H. J. Ng, and C. C. Lim, ``Experimental symmetric private information retrieval with quantum key distribution,'' {\em Quantum Information and Measurement VI 2021, F. Sciarrino, N. Treps, M. Giustina, and C. Silberhorn, eds., Technical Digest Series}, Optica Publishing Group, 2021.

 
\bibitem{Wie83}
S. Wiesner. Conjugate Coding. SIGACT News, 15(1):78–88, January 1983.

\bibitem{GC01}
D. Gottesman and I. Chuang. Quantum Digital Signatures, 2001, arXiv:
quant-ph/0105032

\bibitem{Moc07}
C. Mochon. Quantum weak coin flipping with arbitrarily small bias, 2007, arXiv:
0711.4114.

\bibitem{CK09}
A. Chailloux and I. Kerenidis. 
``Optimal Quantum Strong Coin Flipping.'' 
In 50th Annual IEEE Symposium on Foundations of Computer Science, FOCS 2009, October
25-27, 2009, Atlanta, Georgia, USA, pages 527 -- 533. IEEE Computer Society, 2009.

\bibitem{ACG+16}
D. Aharonov, A. Chailloux, M. Ganz, I. Kerenidis, and L. Magnin. 
``A Simpler Proof of the Existence of Quantum Weak Coin Flipping with Arbitrarily Small Bias,'' {\em SIAM J. Comput.}, 45(3):
633 -- 679, 2016.

%\bibitem{DNS10}
% F. Dupuis, J.B. Nielsen, L. Salvail, ``Secure two-party quantum evaluation of unitaries against specious adversaries'', in Proceedings of the 30th Annual Conference on Advances in Cryptology, CRYPTO ‘10,
%(Springer, Berlin, 2010), pp. 685--706, 2010.

\bibitem{288}
R. Jozsa and B. Schumacher, ``A new proof of the quantum noiseless coding theorem,'' 
	{\em J. Mod. Opt.,} 41(12), 2343--2349, 1994.


\bibitem{423}
B. Schumacher, ``Quantum coding,'' 
	{\em Phys. Rev. A,} 51, 2738--2747, 1995.
	
\bibitem{290}
R. Jozsa, M. Horodecki, P. Horodecki, and R. Horodecki, ``Universal quantum information compression,'' {\em Phys. Rev. Lett.,} 81, 1714, 1998.

\bibitem{269}
M. Horodecki, ``Limits for compression of quantum information carried by ensembles of mixed states,'' 
	{\em Phys. Rev. A,} 57, 3364--3369, 1998.

\bibitem{35}
H. Barnum, C. M. Caves, C. A. Fuchs, R. Jozsa, and B. Schumacher, ``On quantum coding for ensembles of mixed states,'' 
	{\em J. Phys. A Math. Gen.,} 34, 6767--6785, 2001.

\bibitem{201}
M. Hayashi, ``Exponents of quantum fixed-length pure state source coding,'' {\em Phys. Rev. A,} 66, 032321, 2002.

\bibitem{SJ18}
H. Sun and S. Jafar, 
``The capacity of robust private information retrieval with colluding databases,''
{\em IEEE Transactions on Information Theory}, vol. 64, no. 4, pp. 2361--2370, 2018.


\bibitem{FHGHK17}
R. Freij-Hollanti, O. W. Gnilke, C. Hollanti, and D. A. Karpuk, ``Private information retrieval from coded databases with colluding servers,'' {\em SIAM J. Appl. Algebra Geometry}, vol. 1, no. 1, pp. 647--664, 2017.


%#--
\end{thebibliography}
\end{document}